\documentclass[submission,copyright,creativecommons]{eptcs}
\providecommand{\event}{LSFA 2026} 
\usepackage{iftex}

\usepackage{amsmath}
\usepackage{amsthm}
\usepackage{bussproofs}
\usepackage{multicol}
\usepackage{graphicx}
\usepackage{tikz-cd}
\usepackage[capitalise]{cleveref}

\usepackage{mymacros}

\title{Linearising Explicit Substitutions using Intersection Types}
\author{Ana Jorge Almeida 
\institute{LIACC}
\institute{Faculdade de Ciências,\\
Universidade do Porto}
\email{anaj.almeida@fc.up.pt}
\and
Sandra Alves
\institute{CRACS, INESC-TEC}
\institute{Faculdade de Ciências,\\
Universidade do Porto}
\email{sandra@fc.up.pt}
\and
Mário Florido
\institute{LIACC}
\institute{Faculdade de Ciências,\\
Universidade do Porto}
\email{amflorid@fc.up.pt}
}
\def\titlerunning{Linearising Explicit Substitutions using Intersection Types}
\def\authorrunning{A. J. Almeida, S. Alves \& M. Florido}

\begin{document}
\maketitle

\begin{abstract}
    Term expansion was originally introduced in 2004 as a way to relate terms typed in an intersection type system with linear terms. Recently, new applications of term expansion include the relation of lambda-terms with terms typed in other substructural type systems, such as the relevant and the ordered type systems, and the use of quantitative types to relate the strongly normalising lambda-terms with weak linear terms that share the same normal form. Here we define a new term expansion for a calculus with explicit substitutions, using it to relate a $\lambda$-calculus with explicit substitutions to Boudol's resource aware $\lambda$-calculus with multiplicities, where function arguments have a possibly limited availability. 
\end{abstract}

 \section{Introduction}

{\em Term expansion} was first defined in \cite{florido2004linearization} to relate $\lambda$-terms typed in an intersection type system with the linear $\lambda$-calculus. Recently, new applications of {\em term expansion} include the relation with other substructural type systems (relevant and ordered type systems) \cite{alves2022structural}  and the use of quantitative types \cite{alves2022quantitative}. Term expansion is related with several other  works on linearisation of the $\lambda$-calculus \cite{alves2005weak,ehrhard2008uniformity,kfoury2000linearization,mazza2017polyadic}, and contributes to this line of research by providing a simple uniform framework for addressing linearisation related problems. 

Although the $\lambda$-calculus is a convenient model for computational functions, it lacks the means for observing operational properties of the execution of such algorithms, mainly due to its implicit $\beta$-contraction, which is a meta-operation. Over the years, there was a necessity to explicitly deal with substitutions, in order to bridge the gap between theory and implementation, allow efficient reduction in implementations, and avoid variable capture and scope issues \cite{abadi1989explicit}. 

None of the previous works on term expansion and linearisation addresses calculi with explicit substitutions. In this paper we fill this gap by defining a new {\em term expansion} for a calculus with explicit substitutions \cite{rose1996explicit}. We explore its expressiveness by relating it with a resource aware calculus \cite{boudol1993lambda}, where the argument of a function is a bag of either unlimited or limited resources, that is, a multiset of terms. 

We use a modification of the explicit substitution calculus presented in \cite{rose1996explicit}, the $\lambda{\text{xgc}}$-calculus, which is an adaptation of $\lambda\sigma$ \cite{abadi1989explicit} that retains variable names instead of using indices \textit{à la De Bruijn} \cite{de1972lambda}, while preserving strong-normalisation. The other calculus we use is the $\lambda$-calculus with multiplicities \cite{boudol1993lambda} by Gérard Boudol, in which applications  $MN^m$ and substitutions $\boudolsubst{M}{N^m}{x}$, carry an explicit multiplicity $m$, meaning that term $N$ is of possibly limited availability.

Let us now look at the term $t \equiv \expsubst{(xx)}{x}{I}$ of the $\lambda$-calculus with explicit substitutions, with type $\alpha \to \alpha$, where $I$ is the identity function. Its {\em expansion} is the term $\boudolsubst{(x\finboudol{x}{1})}{\finboudol{I}{2}}{x}$ in Boudol's $\lambda$-calculus with multiplicities \cite{boudol1993lambda}, meaning that the identity function $I$ may be copied twice during reduction, in this case, corresponding to the number of free occurrences of $x$ in $xx$. Note that the expanded term has the same type as the original one, $\alpha \to \alpha$. The {\em expansion} of a term can thus be interpreted as a different version of the original term, where multiplicities are explicitly included in the term syntax. In the expanded version, we are able to meticulously track the resource aware behaviour of the term, providing us a better understanding of how resources are being used throughout term evaluation.

The paper is organised as follows. Section 2 presents the $\lambda$-calculus with explicit substitutions, including our proposed modifications. Section 3 presents Boudol's $\lambda$-calculus with multiplicities and introduces some auxiliary lemmas we will need to use further on. Section 4 presents two definitions of term expansion: we will first use a notion of expansion based on associative, commutative and idempotent intersection types, and then a notion of expansion based on associative, commutative and non-idempotent intersection types, and use each to show how to relate terms with explicit substitutions to terms in Boudol's $\lambda$-calculus with multiplicities.
\section{Explicit substitutions} 

The classical presentation of the $\lambda$-calculus \cite{barendregt1984lambda} lacks the means for observing the operational properties of the execution of computational functions, mainly because the process of substitution is a meta-operation \cite{abadi1989explicit}. A consequence of this is that the implementations have to deal with substitutions in a way that distances them from the theoretical calculus. Therefore, it became necessary to find new approaches that deal with substitutions explicitly, in order to bridge this gap between theory and implementation, allow efficient reduction in implementations, and avoid variable capture and scope issues.

One proposed solution is the $\lambda\sigma$-calculus \cite{abadi1989explicit}, developed as a way to design, understand, verify and compare implementations of the $\lambda$-calculus, from interpreters to machines. This calculus relies on De Bruijn's notation \cite{de1972lambda}, using indices instead of variable names, for example, $\lambda xy. xy$ becomes $\lambda \lambda 2 1$.
Although this notation leads to simple formal systems and solves the issue of variable capture, it also means that, if we build more complex $\lambda$-terms, its notation becomes unreadable, and thus harder to study.

The $\lambda$xgc-calculus \cite{rose1996explicit} emerged in an attempt to escape De Bruijn's indices, by proposing a calculus of explicit substitutions in the tradition of the $\lambda\sigma$-calculus, but retaining variable names. Here, explicit substitution is given highest precedence, and it has explicit garbage collection \cite{rose1992explicit}, which is useful and easy to specify using names.

\begin{definition}[$\lambda$x-preterms] \label[definition]{expsubst:preterms}
    The $\lambda$x-preterms are the extension of the $\lambda$-preterms defined inductively by
    \[ M ::= x \mid \lambda x.M \mid MN \mid \expsubst{M}{x}{N} \]
\end{definition}
\begin{definition}[ $\lambda$x-terms] \label[definition]{expsubst:terms}The $\lambda$x-terms are defined from  $\lambda$x-preterms modulo $\alpha$-equivalence. The notions of free variables, renaming and $\alpha$-equivalence are defined as expected.
    \begin{enumerate}
        \item The free variable set of a $\lambda$x-preterm $M$ is denoted \textit{fv}($M$) and defined inductively over $M$ by:
            \[    
            \begin{array}{rcll}
                \textit{fv}(x) &=& \{x\} \\
                \textit{fv}(\lambda x.M) &=& \textit{fv}(M) \setminus \{x\} \\
                \textit{fv}(MN) &=& \textit{fv}(M) \cup \textit{fv}(N) \\
                \textit{fv}(\expsubst{M}{x}{N}) &=& (\textit{fv}(M) \setminus \{x\}) \cup \textit{fv}(N)
            \end{array} 
            \]
            A $\lambda$x-preterm $M$ is closed if and only if \textit{fv}($M$) = $\emptyset$.
        \item The result of renaming all free occurrences of $y$ in $M$ to $z$ is written $M[y := z]$ and defined inductively over $M$ by:
            \[
            \begin{array}{rcll}
                x[y := z] &=& z &\text{if } x = y \\
                x[y := z] &=& x &\text{if } x \neq y \\
                (\lambda x.M)[y := z] &=& \lambda x'. M[x := x'][y := z] &\text{with } x' \notin \textit{fv}(\lambda x. M) \cup \{y, z\} \\
                (MN)[y := z] &=& (M[y := z])(N[y := z]) \\
                (\expsubst{M}{x}{N})[y := z] &=& \expsubst{M[x := x'][y := z]}{x'}{N[y := z]} &\text{with } x' \notin \textit{fv}(\lambda x. M) \cup \{y, z\}
            \end{array}
            \]
        \item That two terms are $\alpha$-equivalent is written $M \equiv N$. This means that they are identical except for renaming of bound variables, which is defined inductively by:
            \[
            \begin{array}{rcll}
                x &\equiv& x \\
                \lambda x. M &\equiv& \lambda y. N &\text{if } M[x := z] \equiv N[y := z] \text{ for } z \notin \textit{fv}(MN) \\
                MN &\equiv& PQ &\text{if } M \equiv P \text{ and } N \equiv Q \\
                \expsubst{M}{x}{N} &\equiv& \expsubst{P}{y}{Q} &\text{if } N \equiv Q \text{ and } M[x := z] \equiv P[y := z] \text{ for } z \notin \textit{fv}(MP)
            \end{array}
            \]
        \item The set of $\lambda$x-terms, $\Lambda$x, is the set of $\lambda$x-preterms modulo $\equiv$. The set of closed $\lambda$x-terms, $\Lambda$x$^\circ$, is the subset of $\Lambda$ where the representatives are closed preterms.
    \end{enumerate}
\end{definition}
\begin{definition}[$\lambda$xgc-reduction] Define the following reductions on $\lambda$x-terms. \label[definition]{expsubst:reduction}
    \begin{enumerate}
        \item \textit{substitution generation}, $\barrow$, is the contextual closure of $(\lambda x. M)N \to \expsubst{M}{x}{N}$.
        \item \textit{explicit substitution}, $\xarrow$, is defined as the contextual closure of the union of 
        \[ 
        \begin{array}{rcll}
             \expsubst{x}{x}{N} &\xvarrow& x \\
             \expsubst{x}{y}{N} &\xvgcarrow& x &\text{if } x\not\equiv y \\
             \expsubst{(\lambda x.M)}{y}{N} &\xabarrow& \lambda x. \expsubst{M}{y}{N} \\
             \expsubst{(M_1M_2)}{y}{N} &\xaparrow& \expsubst{M_1}{y}{N} \expsubst{M_2}{y}{N}
        \end{array}
        \]
        \item \textit{garbage collection}, $\gcarrow$, is the contextual closure of $\expsubst{M}{x}{N} \to M$ if $x \notin \textit{fv}(M)$. The subterm $N$ in $\expsubst{M}{x}{N}$ is called \textit{garbage} if $x \notin \textit{fv}(M)$.
        \item $\lambda$xgc-reduction is $\singlearrow = \barrow \cup \xarrow \cup \gcarrow$, $\lambda$x-reduction is $\bxarrow = \barrow \cup \xarrow$ and $\xgcarrow = \xarrow \cup \gcarrow$.
    \end{enumerate}
\end{definition}

\cref{expsubst:orig_ex} serves as a simple example intended to showcase how reductions are performed in the $\lambda$xgc-calculus. The evaluation rules will be explored more thoroughly later on.

\begin{example}~\label[example]{expsubst:orig_ex}
    Consider the $\lambda{\text{x}}$-term $(\lambda x. xx)I$, where $I \equiv \lambda z.z$.
    \[
    \begin{array}{rcll}
        (\lambda x. xx)I &\barrow& \expsubst{(xx)}{x}{I} \\
        &\xaparrow& \expsubst{x}{x}{I} (\expsubst{x}{x}{I})  \xvarrow I(\expsubst{x}{x}{I}) \xvarrow II \barrow \expsubst{z}{z}{I} \xvarrow I
    \end{array}
    \]
\end{example}

However, we wish to emulate the behaviour of functional programming languages, so we suggest some alterations that approximate the $\lambda$xgc-calculus reduction strategy to that of mainstream programming languages.

\subsection{Weak-head reduction}

Functional language compilers consider only weak-head reductions, in which $\beta$-reduction is performed only at the outermost, leftmost redex, and reductions do not occur under $\lambda$-abstractions \cite{fradet1994compilation}. These reductions stop evaluation when a weak-head normal form is produced, i.e., terms where no further reduction is possible at the head position.

\begin{figure}[h!]
    \figline
    \[ \expsubst{(M_1 M_2)}{x}{N} \xaparrow \expsubst{M_1}{x}{N} \expsubst{M_2}{x}{N} \ \text{if } x \in \textit{fv}(M_1) \text{ and } x \in \textit{fv}(M_2) \]
    \[ \expsubst{(M_1 M_2)}{x}{N} \xaparrow \expsubst{M_1}{x}{N} M_2 \ \text{if } x \in \textit{fv}(M_1) \text{ and } x \notin \textit{fv}(M_2) \]
    \[ \expsubst{(M_1 M_2)}{x}{N} \xaparrow M_1 \expsubst{M_2}{x}{N} \ \text{if } x \notin \textit{fv}(M_1) \text{ and } x \in \textit{fv}(M_2) \]
    \begin{multicols}{2}
        \prooftree
        \AxiomC{$M \ctxarrow M'$}
        \UnaryInfC{$MN \ctxarrow M'N$}
        \endprooftree
    
        \prooftree
        \AxiomC{$MN_2 \ctxarrow M' \ \ \ \text{if} \ x \notin \textit{fv}(N_2)$}
        \UnaryInfC{$(\expsubst{M}{x}{N_1})N_2 \ctxarrow \expsubst{M'}{x}{N_1}$} 
        \endprooftree
    \end{multicols}
    \figline
\caption{Evaluation in the $\lambda$xgc-calculus}
\label{expsubst:evaluation}
\end{figure}

\cref{expsubst:evaluation} introduces our proposed modification to the evaluation rules presented by Rose \cite{rose1996explicit}. We maintain reductions $\barrow$, $\xvarrow$, $\xvgcarrow$ and $\gcarrow$ from \cref{expsubst:reduction}, and remove evaluations under $\lambda$-abstractions, rule $\xabarrow$; perform evaluations only on the left side of applications; and introduce a new rule where reduction is performed at a distance -- this rule allows us to ``jump'' over explicit substitutions appended to $\lambda$-abstractions, in order to reach the application argument and reduce the term further \cite{accattoli2010structural}. These last two rules are contextual rules, and we have defined them as $\ctxarrow \in \{\singlearrow\}$. We also modified the $\xaparrow$ rule where, instead of always distributing the substitution to both parts of the application, we need to check where that variable occurs, allowing us to guarantee that we will not distribute substitutions where a variable does not occur. This is helpful since we are dealing with weak-head reductions, which means that sometimes we are not able to reach a function argument until much later in the evaluation, hence it is useful to have control of the propagation of substitution. For simplicity, we shall often refer to evaluations in this calculus using the general arrow $\singlearrow$.

\begin{example} \label[example]{expsubst:mod_orig_ex}
    Consider the $\lambda$x-term $\Delta I$, where $\Delta \equiv \lambda x. xx$ and $I \equiv \lambda z.z$.
    \[
    \begin{array}{rcll}
        \Delta I \barrow \expsubst{(xx)}{x}{I} &\xaparrow& \expsubst{x}{x}{I} (\expsubst{x}{x}{I}) \\
        &\xvarrow& I (\expsubst{x}{x}{I}) \barrow \expsubst{z}{z}{(\expsubst{x}{x}{I})} \xvarrow \expsubst{x}{x}{I} \xvarrow I
    \end{array}
    \]
\end{example}

Following \cref{expsubst:orig_ex}, the proposed changes can be seen in \cref{expsubst:mod_orig_ex}. 

These modifications also mean that the resulting terms will be in closed weak-head normal form, which are $\lambda$-abstractions or $\lambda$-abstractions with explicit substitutions.

\begin{example} \label[example]{expsubst:mod_compare_ex}
    Consider the $\lambda$x-term $\expsubst{(yI)}{y}{Z}$, where $I \equiv \lambda x.x$ and $Z \equiv \lambda z_1 z_2. z_1 z_2$. 
    
    We shall, firstly, reduce it using the $\xaparrow$ reduction as defined in \cref{expsubst:reduction}. 
    \[
    \begin{array}{rcll}
        \expsubst{(yI)}{y}{Z} &\xaparrow& \expsubst{y}{y}{Z}(\expsubst{I}{y}{Z}) \xvarrow Z(\expsubst{I}{y}{Z}) \barrow \expsubst{(\lambda z_2.z_1z_2)}{z_1}{(\expsubst{I}{y}{Z})} 
    \end{array}
    \]
    Now, we evaluate the term considering the $\xaparrow$ reduction as defined in \cref{expsubst:evaluation}.
    \[
    \begin{array}{rcll}
        \expsubst{(yI)}{y}{Z} &\xaparrow& \expsubst{y}{y}{Z}I \xvarrow ZI \barrow \expsubst{(\lambda z_2.z_1z_2)}{z_1}{I} 
    \end{array}
    \]
\end{example}
\section{The \texorpdfstring{$\lambda$}{lambda}-calculus with multiplicities}

The $\lambda$-calculus does not originally track resources, meaning that each argument is infinitely available. Looking at the $\beta$-reduction rule $(\lambda x. M)N \to_\beta \origsubst{M}{N}{x}$, it is intuitive that the argument $N$ is available however many times $x$ occurs free in $M$.  

Motivated by the study of the encoding of the lazy $\lambda$-calculus into the $\pi$-calculus \cite{milner1992functions}, Boudol \cite{boudol1993lambda} proposed a refinement of the $\lambda$-calculus, where function arguments are bags of resources with multiplicities that indicate how many copies of them are available for usage. The previous reduction is then altered to $((\lambda x. M)\infboudol{N}) \boudolarrow \boudolsubst{M}{\infboudol{N}}{x}$, explicitly indicating how many times argument $N$ is available.

\subsection{Syntax}

\begin{definition}[Boudol's terms] \label[definition]{boudol:terms}
The following terms are the syntax of Boudol's $\lambda$-calculus with multiplicities.
\[ 
\begin{array}{rcll}
    M &::=& x \mid \lambda x.M \mid (MP) \mid \boudolsubst{M}{P}{x} &\text{terms}\\
    P &::=& 1 \mid M \mid (P \mid P) \mid \finboudol{M}{m} &\text{bags of terms}\\
    V &::=& \lambda x. M \mid \boudolsubst{V}{P}{x} &\text{values}
\end{array}
\]
\end{definition} 
\cref{boudol:terms} shows the syntax of the $\lambda$-calculus with multiplicities \cite{boudol1993lambda}. Our terms are standard (variables, abstractions, applications and explicit substitutions), the bags of terms may be 1 (the neutral element), a term $M$, a parallel composition of bags of terms, which is commutative and associative, or a term available $m$ times -- we will look into this later. Finally, we have values, which are terms in weak-head normal form. 

Now, consider the term $(M\finboudol{N}{m})$, where $m \in \mathbb{N} \cup \{\infty\}$, and $\finboudol{N}{m}$ can be written as a parallel composition $\finboudol{N}{m} = (N \mid \cdots \mid N)$, $m$ times. Similarly, if $m = \infty$, it means that $\infboudol{N} = (N \mid N \mid \cdots)$, hence $\infboudol{N}$ is an infinite parallel composition of copies of $N$. The parallel composition is defined as follows \cite{boudol1993lambda}.
\begin{definition} Parallel composition is intended to be commutative and associative, with 1 as its neutral element. \label{boudol:parallel_comp}
    \[
    \begin{array}{rcll}
        (P \mid 1) &\equiv& P \\
        (P \mid Q) &\equiv& (Q \mid P) \\
        (P \mid (Q \mid R)) &\equiv& ((P \mid Q) \mid R) \\
        \finboudol{M}{0} &\equiv& 1 \\
        \finboudol{M}{m+1} &\equiv& (M \mid \finboudol{M}{m})
    \end{array}
    \]
\end{definition}

Boudol \cite{boudol1993lambda} proposes a general set $\Lambda^m$, where $m = \mathbb{N} \cup \{\infty\}$. However, from $\Lambda^m$, we will distinguish the subset that deals with only infinite multiplicities, $\Lambda^\infty$, and the subset that deals only with finite multiplicities, which we will refer to as $\Lambda$.

Dealing with finite multiplicities, $m \in \mathbb{N}$, suggests that the argument may not always be available for usage. Consequently, there is the possibility of \textit{deadlock} when we have fewer resources available than variables to be substituted. A deadlocked term is a term that cannot be reduced any further but is not a value.

\subsection{Evaluation}
\begin{figure}[h!] 
    \figline
    
    Evaluation follows a weak-head reduction strategy.

    \begin{multicols}{2}
    \prooftree
    \AxiomC{$M \boudolarrow M'$}
    \RightLabel{\scriptsize E$_1$}
    \UnaryInfC{$MP \boudolarrow M'P$}
    \endprooftree

    \prooftree
    \AxiomC{$M \boudolarrow M'$}
    \RightLabel{\scriptsize E$_2$}
    \UnaryInfC{$\boudolsubst{M}{P}{x} \boudolarrow \boudolsubst{M'}{P}{x}$}
    \endprooftree
\end{multicols}

\begin{multicols}{2}
    \prooftree
    \AxiomC{}
    \RightLabel{\scriptsize E$_3$}
    \UnaryInfC{$((\lambda x.M)P) \boudolarrow \boudolsubst{M}{P}{x}$}
    \endprooftree

    \prooftree
    \AxiomC{$(VP) \boudolarrow M \hspace{3mm} x \notin \textit{fv}(P)$}
    \RightLabel{\scriptsize E$_4$}
    \UnaryInfC{$((\boudolsubst{V}{R}{x})P) \boudolarrow \boudolsubst{M}{R}{x}$}
    \endprooftree
\end{multicols}

    \prooftree
    \AxiomC{$(\boudolsubst{M}{N}{x}) \boudolarrow M' \hspace{3mm} P \equiv (N | R) \hspace{3mm} x \notin \textit{fv}(N)$}
    \RightLabel{\scriptsize E$_5$}
    \UnaryInfC{$\boudolsubst{M}{P}{x} \boudolarrow \boudolsubst{M'}{R}{x}$}
    \endprooftree

\begin{multicols}{2}
    \prooftree
    \AxiomC{}
    \RightLabel{\scriptsize S$_1$}
    \UnaryInfC{$\boudolsubst{x}{M}{x} \boudolarrow M$}
    \endprooftree

    \prooftree
    \AxiomC{$\boudolsubst{M}{N}{x} \boudolarrow M'$}
    \RightLabel{\scriptsize S$_2$}
    \UnaryInfC{$\boudolsubst{(MP)}{N}{x} \boudolarrow (M'P)$}
    \endprooftree
\end{multicols}

    \prooftree
    \AxiomC{$\boudolsubst{M}{N}{x} \boudolarrow M' \hspace{3mm} z \neq x \hspace{3mm} z \notin \textit{fv}(N)$}
    \RightLabel{\scriptsize S$_3$}
    \UnaryInfC{$\boudolsubst{(\boudolsubst{M}{P}{z})}{N}{x} \boudolarrow \boudolsubst{M'}{P}{z}$}
    \endprooftree
    \figline
\caption{Evaluation in Boudol's calculus}
\label{boudol:evaluation}
\end{figure}

\cref{boudol:evaluation} shows the evaluation rules used by Boudol \cite{boudol1993lambda}. There is also a ``garbage collection'' rule, defined as follows.

\prooftree
\AxiomC{}
\RightLabel{$x \notin \textit{fv}(M)$}
\UnaryInfC{$\boudolsubst{M}{P}{x} \boudolarrow M$}
\endprooftree

Similarly, since $1$ denotes an empty bag, $\boudolsubst{M}{1}{x} \equiv M$.

Given a bag $P=(\finboudol{M_1}{m_1} \mid \cdots \mid \finboudol{M_k}{m_k})$, for some $k \in \mathbb{N}$, one may fetch any term inside $P$, thus we can see that evaluation is non-deterministic. In the rest of the paper, we will use $\boudolequal$ standing for the least equivalence relation containing $\boudolarrow$, and $\boudoldblarrow$ as the transitive closure of $\boudolarrow$.
\begin{example}~\label[example]{boudol:succ_ex}
    Let us consider the following example, where $I \equiv \lambda z.z$.
    \[
        \boudolsubst{(x\finboudol{x}{1})}{\finboudol{I}{2}}{x} \boudolarrow \boudolsubst{(I\finboudol{x}{1})}{\finboudol{I}{1}}{x} \boudolarrow \boudolsubst{(\boudolsubst{z}{\finboudol{x}{1}}{z})}{\finboudol{I}{1}}{x} \boudolarrow \boudolsubst{(\boudolsubst{x}{1}{z})}{\finboudol{I}{1}}{x} \boudolarrow \boudolsubst{(\boudolsubst{I}{1}{x})}{1}{z} \equiv \boudolsubst{I}{1}{z} \equiv I
    \] 
    
\end{example}

\begin{example} \label[example]{boudol:fail_ex}
    Let us consider the following example, where $I \equiv \lambda z.z$.
    \[
        \boudolsubst{(x\finboudol{x}{1})}{\finboudol{I}{1}}{x} \boudolarrow \boudolsubst{(I\finboudol{x}{1})}{1}{x} \boudolarrow \boudolsubst{(\boudolsubst{z}{\finboudol{x}{1}}{z})}{1}{x} \boudolarrow \boudolsubst{(\boudolsubst{x}{1}{z})}{1}{x} \equiv \boudolsubst{x}{1}{x}
    \]
\end{example}
Following~\cref{boudol:succ_ex} and \cref{boudol:fail_ex}, we can see a well-formed and an ill-formed term respectively. In \cref{boudol:succ_ex}, we are able to reach a weak-head normal form, whereas, in \cref{boudol:fail_ex}, we do not have any resources available to substitute $x$, meaning that no reduction is possible, therefore we enter deadlock. 

\begin{lemma}[Substitution lemma]~\label[lemma]{boudol:subst_lemma} \\
    \begin{enumerate}
        \item~\label{boudol:subst_lemma_inf}Given that $y \notin \textit{fv}(P)$, $\boudolsubst{(\boudolsubst{M}{\infboudol{Q}}{y})}{\infboudol{P}}{x} \boudolequal \boudolsubst{(\boudolsubst{M}{\infboudol{(\boudolsubst{Q}{\infboudol{P}}{x})}}{y})}{\infboudol{P}}{x}$
        \item~\label{boudol:subst_lemma_fin}Given that $y \notin \textit{fv}(P)$, $\boudolsubst{(\boudolsubst{M}{\finboudol{Q}{n}}{y})}{\finboudol{P}{m}}{x} \boudolequal \boudolsubst{(\boudolsubst{M}{\finboudol{(\boudolsubst{Q}{\finboudol{P}{m-s}}{x})}{n}}{y})}{\finboudol{P}{s}}{x}$
    \end{enumerate}
\end{lemma}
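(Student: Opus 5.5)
I would prove both items by structural induction on $M$, with $Q$, $P$, the multiplicities and the variables $x \neq y$ fixed. Both sides are compared under $\boudolequal$, the least equivalence containing $\boudolarrow$. This lets me evaluate each side separately to a common term, and also use the garbage rule $\boudolsubst{M}{P}{x} \boudolarrow M$ when $x\notin\textit{fv}(M)$, together with $\boudolsubst{M}{1}{x}\equiv M$. The guiding intuition is the usual one: substituting into the bag means pushing the outer substitution for $x$ inside every copy of $Q$ that $y$ may later fetch. In the finite case the $m$ copies of $P$ are split between what $Q$ consumes and what $M$ consumes directly. I read the $m-s$ appearing in item~\ref{boudol:subst_lemma_fin} as the part of the budget handed to each copy of $Q$, with $s$ the remainder kept outside. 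Since the statement fixes no constraint on $s$, I would first make explicit that $s$ is determined by the number of free occurrences of $x$ in $M$ outside the bag, so that the budget is not duplicated.

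For item~\ref{boudol:subst_lemma_inf} the cases are as follows.

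\emph{Variables.} If $M\equiv y$, the left side reduces by E$_2$ and S$_1$, fetching one copy, to $\boudolsubst{(\boudolsubst{Q}{\infboudol{P}}{x})}{\ldots}{y}$. The right side reduces by S$_1$ directly to $\boudolsubst{Q}{\infboudol{P}}{x}$. The leftover substitutions are garbage, because $y\notin\textit{fv}(P)$ and the infinite bag absorbs the consumed copy ($\infboudol{N}\equiv (N\mid\infboudol{N})$). If $M\equiv x$ or $M\equiv z$ with $z\notin\{x,y\}$, the substitution on $y$ is garbage on both sides.

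\emph{Applications and abstractions.} For $M\equiv M_1R$, I would use the inductive hypothesis on $M_1$ together with S$_2$/E$_1$, which are the only rules acting on the head under weak-head evaluation. Abstractions are values, and substitutions attached to them stay frozen, so there the two sides coincide up to congruence. This point needs care, because $\boudolequal$ is not stated to be a congruence under $\lambda$. I would therefore state the result for closure under the evaluation contexts of \cref{boudol:evaluation}, and otherwise argue that both sides are stuck values related syntactically by the same transformation.

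\emph{Nested substitutions.} For $M\equiv\boudolsubst{M'}{R}{z}$ I would apply S$_3$ and E$_5$ and then the inductive hypothesis, since $z$ can be chosen fresh for $P$ and $Q$ by $\alpha$-conversion.

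Item~\ref{boudol:subst_lemma_fin} follows the same skeleton, with resource accounting added. Each time S$_1$ or E$_5$ fetches a copy, I use \cref{boudol:parallel_comp}, namely $\finboudol{N}{k+1}\equiv(N\mid\finboudol{N}{k})$, to decrement the corresponding multiplicity. I then check that both sides fetch the same total number of copies of $P$: $s$ directly for the occurrences of $x$ in $M$, and $m-s$ within each copy of $Q$ fetched from $\finboudol{Q}{n}$. Deadlock must also be handled. If one side runs out of copies of $P$ or $Q$ and deadlocks, the other side must reach the same deadlocked term. Since $\boudolequal$ is only an equivalence, it is enough to show that both sides reduce to a common term, possibly a stuck one.

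The main obstacle is this finite case. Non-deterministic fetching and the single parameter $s$ make the statement delicate: if $Q$ is fetched several times, each copy receives $m-s$ copies of $P$, so the accounting only works with a suitable side condition, for instance $n\le 1$, or $x$ occurring at most once in $Q$. My first attempt would be to strengthen the induction hypothesis. I would prove a general version in which the copies of $P$ are split arbitrarily among the $n$ copies of $Q$ and the outside, and then recover the displayed statement as the instance the paper intends, namely the one used later for expansion, where $m-s$ counts the occurrences of $x$ inside the expanded $Q$.
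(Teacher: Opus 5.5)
Your top-level strategy matches the paper's: its proof consists only of the phrase ``by structural induction on $M$''. However, the inductive steps you sketch do not go through. No rule of \cref{boudol:evaluation} ever rewrites a term while it sits inside a bag. E$_5$ only removes copies, and S$_2$, S$_3$ only move a single fetched copy towards the head, never into an argument bag or into the bag of an inner substitution. Hence, for $M\equiv M_1R$, a hypothesis about $\boudolsubst{(\boudolsubst{M_1}{\infboudol{Q}}{y})}{\infboudol{P}}{x}$ cannot be transported to $\boudolsubst{(\boudolsubst{(M_1R)}{\infboudol{Q}}{y})}{\infboudol{P}}{x}$ via E$_1$/E$_2$. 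Moreover, once E$_3$ fires (say $M_1\equiv\lambda w.M_2$), the term you must continue with, $\boudolsubst{M_2}{R}{w}$, is not a subterm of $M$. The case $M\equiv\boudolsubst{M'}{R}{z}$ breaks in the same way, since S$_3$ never reaches $R$. The abstraction case cannot be closed by reduction at all. For $M\equiv\lambda z.y$, every reduct of the left-hand side still carries the bag $\infboudol{Q}$ for $y$, and every reduct of the right-hand side carries $\infboudol{(\boudolsubst{Q}{\infboudol{P}}{x})}$. So there is no common reduct, and being ``related syntactically by the same transformation'' does not give $\boudolequal$.

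What you are missing is that, with the paper's definition, $\boudolequal$ (the least equivalence containing the nondeterministic $\boudolarrow$) relates any two terms $A,B$. Take $w\notin\textit{fv}(A)\cup\textit{fv}(B)$. E$_5$ applied to S$_1$, followed by garbage collection, gives $\boudolsubst{w}{(A\mid B)}{w}\boudolarrow\boudolsubst{A}{B}{w}\boudolarrow A$. Since $(A\mid B)\equiv(B\mid A)$, it also gives $\boudolsubst{w}{(A\mid B)}{w}\boudolarrow\boudolsubst{B}{A}{w}\boudolarrow B$; inside $\infboudol{\Lambda}$, use $(\infboudol{A}\mid\infboudol{B})$ instead. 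So both items hold outright, for every $n$ and every $s\le m$, and the side conditions you propose for item 2 are unnecessary for the statement as written. Your resource-counting worry is that the right-hand side offers $n(m-s)+s$ copies of $P$. That is not a split of $m$ copies unless $n=1$ or $s=m$, so the displayed statement is not an instance of your generalised one either. The worry only matters for a finer relation such as having a common reduct, and for that relation the example $M\equiv\lambda z.y$ already refutes item 1. So no strengthening of the induction hypothesis can repair your argument: the statement itself would have to change, and the induction would have to follow evaluation rather than the syntax of $M$.
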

\begin{proof}
    By structural induction on $M$.
\end{proof}

\begin{lemma}~\label[lemma]{boudol:subst_app_lemma}  \\
    \begin{enumerate}
        \item~\label{boudol:subst_app_lemma_inf} If $\boudolsubst{(M\infboudol{Q})}{\infboudol{P}}{x} \boudolarrow M_1$ and $(\boudolsubst{M}{\infboudol{P}}{x} \infboudol{(\boudolsubst{Q}{\infboudol{P}}{x})}) \boudolarrow M_2$ then $M_1 \boudolequal M_2$. 
        \item~\label{boudol:subst_app_lemma_fin} If $\boudolsubst{(M\finboudol{Q}{n})}{\finboudol{P}{m}}{x} \boudolarrow M_1$ and $(\boudolsubst{M}{\finboudol{P}{s}}{x} \finboudol{(\boudolsubst{Q}{\finboudol{P}{m-s}}{x})}{n}) \boudolarrow M_2$ then $M_1 \boudolequal M_2$.
    \end{enumerate}
\end{lemma}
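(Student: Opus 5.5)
We will prove both parts by structural induction on $M$, following the pattern of the proof of \cref{boudol:subst_lemma}, and we will use that lemma to close each case. The idea is that the reductions of $\boudolsubst{(M\infboudol{Q})}{\infboudol{P}}{x}$ and of the distributed form $(\boudolsubst{M}{\infboudol{P}}{x}\,\infboudol{(\boudolsubst{Q}{\infboudol{P}}{x})})$ are both driven by the head of $M$. Each step of one can therefore be matched by a step of the other, up to where the pending substitution on $x$ is placed. We will prove that resulting terms are $\boudolequal$-related rather than syntactically equal. This gives the freedom to push the substitution $\boudolsubst{\cdot}{\infboudol{P}}{x}$ inward or outward using the equalities generated by $\boudolarrow$ and by garbage collection.

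For the base cases, take $M=x$. On the left the only step is via S$_2$ with premise S$_1$, fetching one copy of $P$, which gives $M_1 = (P\infboudol{Q})$ with the bag still attached, i.e.\ $\boudolsubst{(P\infboudol{Q})}{\infboudol{P}}{x}$ in the infinite case. On the right E$_1$ with S$_1$ gives $M_2 = (P\,\infboudol{(\boudolsubst{Q}{\infboudol{P}}{x})})$. These are related by pushing the outer substitution into the argument and garbage collecting it from $P$, which is permissible once we assume, as in \cref{boudol:subst_lemma}, that bound variables are chosen fresh. If $M=y\neq x$, neither side reduces, so the statement holds vacuously.

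For $M=\lambda z.M'$, the left reduces via E$_2$/E$_5$-style steps to a substitution of $\infboudol{Q}$ for $z$ under $\boudolsubst{\cdot}{\infboudol{P}}{x}$. The right reduces by E$_3$ to $\boudolsubst{(\boudolsubst{M'}{\infboudol{P}}{x})}{\infboudol{(\boudolsubst{Q}{\infboudol{P}}{x})}}{z}$. Their equivalence is exactly \cref{boudol:subst_lemma}(\ref{boudol:subst_lemma_inf}) after commuting independent substitutions, using $z\notin \textit{fv}(P)$. For the application and substitution cases $M=M'R$ and $M=\boudolsubst{M'}{R}{z}$, the redex lies inside $M'$. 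The steps are obtained through the congruence rules E$_1$, E$_2$, E$_4$, S$_2$, S$_3$, so we apply the induction hypothesis to the premise and rebuild the context.

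The finite case, part (\ref{boudol:subst_app_lemma_fin}), runs identically but uses \cref{boudol:subst_lemma}(\ref{boudol:subst_lemma_fin}). Here the resources must be tracked: the $m$ copies of $P$ are split into $s$ copies kept on the head and $m-s$ copies distributed into the $n$ copies of $Q$. In the variable case one copy of $P$ is consumed on each side, so we must check that $s\geq 1$ whenever the right-hand step exists. The main obstacle is the nondeterminism of fetching from bags in the finite case. A single step on the left may take its resource from a position that does not correspond to the split $s/(m-s)$ on the right. We intend to handle this by showing that any such choice can be re-balanced using \cref{boudol:subst_lemma}(\ref{boudol:subst_lemma_fin}), which allows us to move copies of $P$ between the head and the argument.
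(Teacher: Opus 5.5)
Your plan matches the paper's one-line proof (structural induction on $M$), but the steps you supply to flesh it out do not hold. The relation $\boudolequal$ is generated only by the weak-head rules of Figure~\ref{boudol:evaluation} plus garbage collection. None of these rewrites inside a bag, and none moves a substitution into the argument of an application. So ``pushing the outer substitution into the argument'' (your variable case) and ``commuting independent substitutions'' (your abstraction case) are not available moves.

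The pending $\boudolsubst{\cdot}{\infboudol{P}}{x}$ can only meet $Q$ once the head is a value that consumes the argument bag. In the abstraction case this works without any commutation:
\begin{itemize}
\item the right-hand side steps by E$_4$, not E$_3$ (its side condition needs $x\notin\textit{fv}(P)$), to $\boudolsubst{(\boudolsubst{M'}{\infboudol{(\boudolsubst{Q}{\infboudol{P}}{x})}}{z})}{\infboudol{P}}{x}$;
\item the left-hand side steps by E$_2$/E$_3$ to $\boudolsubst{(\boudolsubst{M'}{\infboudol{Q}}{z})}{\infboudol{P}}{x}$;
\item these are literally the two sides of \cref{boudol:subst_lemma}(\ref{boudol:subst_lemma_inf}).
\end{itemize}

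When the head is stuck, nothing connects the two sides. Your case $M=y\neq x$ is not vacuous. Garbage collection gives $\boudolsubst{y}{\infboudol{P}}{x}\boudolarrow y$, so the right-hand side steps to $(y\,\infboudol{(\boudolsubst{Q}{\infboudol{P}}{x})})$. For $x\notin\textit{fv}(Q)$ the left-hand side steps to $(y\,\infboudol{Q})$, and these are two distinct stuck terms. Likewise, $M=Q=x$ with $P=z$ free yields $\boudolsubst{(z\,\infboudol{x})}{\infboudol{z}}{x}$ on the left and $(z\,\infboudol{(\boudolsubst{x}{\infboudol{z}}{x})})$ on the right. Only the non-existent ``push'' would identify them.

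So structural induction on $M$ alone cannot work. At the very least you need a hypothesis that $\boudolsubst{M}{\infboudol{P}}{x}$ evaluates to a value, and an induction on that evaluation, generalised to application spines. Your ``apply the induction hypothesis to the premise'' also fails as stated, for two reasons. The premises of the left-hand steps (E$_5$ over S$_2$/S$_3$, or E$_2$) are not of the lemma's shape. And the hypothesis for $M'$ concerns $\boudolsubst{(M'R)}{\infboudol{P}}{x}$, not the longer spine $\boudolsubst{((M'R)\,\infboudol{Q})}{\infboudol{P}}{x}$ you actually face.

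There are three further points.
\begin{itemize}
\item Read literally, $\boudolequal$ (the least equivalence containing a \emph{nondeterministic} $\boudolarrow$) relates any two terms. For $y\notin\textit{fv}(A)\cup\textit{fv}(B)$, the term $\boudolsubst{y}{(\infboudol{A}\mid\infboudol{B})}{y}$ evaluates by E$_5$/S$_1$ and garbage collection both to $A$ and to $B$ (use $(A\mid B)$ for finite bags). Under that reading the lemma is trivial and the examples above are not counterexamples. But the proof is then this one-line observation, not your induction. Under any finer reading, e.g.\ joinability, your base cases fail as shown.
\item The nondeterminism you single out as the main obstacle is a red herring. Both $M_1\boudolequal\boudolsubst{(M\,\infboudol{Q})}{\infboudol{P}}{x}$ and $M_2\boudolequal(\boudolsubst{M}{\infboudol{P}}{x}\,\infboudol{(\boudolsubst{Q}{\infboudol{P}}{x})})$ hold automatically. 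Only the relation between the two starting terms matters.
\item In part~(\ref{boudol:subst_app_lemma_fin}) you misread the resources. Each of the $n$ copies of $\boudolsubst{Q}{\finboudol{P}{m-s}}{x}$ carries its own $m-s$ copies of $P$. The right-hand side therefore holds $s+n(m-s)$ copies, not $m$ split between head and argument, and $s$ is unconstrained. Rebalancing via \cref{boudol:subst_lemma}(\ref{boudol:subst_lemma_fin}), which has the same shape, cannot repair this. Take $M=Q=x$, $P=I$ and $m=n=s=2$. The left-hand side evaluates to $I$. The right-hand side is $(\boudolsubst{x}{\finboudol{I}{2}}{x}\,\finboudol{x}{2})$, where E$_4$ is blocked by $x\in\textit{fv}(\finboudol{x}{2})$, so evaluation ends at the free variable $x$.
\end{itemize}
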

\begin{proof} 
    By structural induction on $M$.
\end{proof}

\section{Term Expansion}
Our main goal is to relate $\lambda$-terms in a calculus with explicit substitutions with terms of Boudol's calculus. To achieve this, we will use the notion of \textit{term expansion} \cite{florido2004linearization}. 

Expansion \cite{florido2004linearization} was firstly introduced to study the relation between terms typable using intersection
types~\cite{coppo1980extension} and terms typable using simple types, through the process of replacing each occurrence of a variable in a $\lambda$-term by a new variable, thus sustaining the process of linearisation of strongly normalising $\lambda$-terms. It was inspired by the seminal work on linearisation from Kfoury \cite{kfoury2000linearization},
where a new calculus was defined in which, instead of requiring terms to be syntactically linear, a weaker linearity condition was imposed. Kfoury
defined the $\lambda^{\wedge}$-calculus, where applications were of the form $M.P_1 \wedge \ldots \wedge P_n$, and a new notion of reduction, which, when the redex was
of the form $(\lambda x.M).P_1 \wedge \ldots \wedge P_n$, with $x$ occuring $n$ times free is $M$, replaced the $ith$ occurrence of $x$ by $P_i$. 
Kfoury then defined a notion of contraction of terms and reductions of this calculus into the $\lambda$-calculus. The set of well-formed terms of the new calculus (corresponding to a linear version of a $\lambda$-term), were the ones that could be contracted into the $\lambda$-calculus. Term expansion simplified this previous framework by defining linearisation inside the $\lambda$-calculus from non-linear terms into linear ones and using standard $\beta$-reduction. 

More recent applications of {\em term expansion} include the relation with substructural type systems such as relevant and ordered type systems~\cite{alves2022structural}, and quantitative types~\cite{alves2022quantitative}. An important note is that  both the original notion of term expansion and the new one defined in this paper, define a relation and not a function, thus expansion should not be confused with a term transformation algorithm.

Following this previous work, the new term expansion defined here relates $\lambda$-terms with explicit substitutions with terms in Boudol's calculus, while inferring the amount of times each resource is available. Although Boudol uses associative, commutative and non-idempotent intersection types with the universal type $\omega$, to accomplish our goal we divide expansion into two definitions: one using associative, commutative and idempotent intersection types, and another one using associative, commutative and non-idempotent intersection types, to deal with infinite and finite multiplicities respectively, eliminating the universal type -- this will be explained in more detail further on.

Therefore, we will use associative, commutative and idempotent intersection types to relate $\lambda$-terms with terms in $\infboudol{\Lambda}$, and associative, commutative and non-idempotent intersection types to relate $\lambda$-terms with terms in $\Lambda$. Before that, we will provide a brief introduction to intersection types.

\subsection{Intersection types}

Intersection types originate in the works of Barendregt, Coppo and Dezani \cite{barendregt1983filter, coppo1980extension}. Intersection type systems without the universal type $\omega$, as presented in \cite{coppo1980extension}, give a characterization of the strongly normalisable terms, in the sense that a term is typed in an intersection type system without $\omega$ if and only if it is strongly normalisable.

In an intersection type system, variables can be assigned different types, unlike the Curry Simple Type System in which each variable is assigned a single type. For example,  the term $\lambda x. xx$, is not typable in the Curry Type System, but has type $(\alpha \cap \alpha \to \beta) \to \beta$ in an Intersection Type System.

\begin{definition} \label[definition]{exp_inter:types}
    Let $\alpha$ range over an infinite set of type variables:
    \[ (\cap\text{-types}) \ \sigma ::= \alpha \mid \sigma_1 \cap \cdots \cap \sigma_n \to \sigma \]
\end{definition}
The original Coppo-Dezani Intersection Type System \cite{coppo1980extension} considers the intersection operator to be associative, commutative and idempotent, although there are other works which consider non-idempotent intersections \cite{bucciarelli2017non,florido2004linearization,kfoury2000linearization}. We will henceforth use ACI-intersection to denote associative, commutative and idempotent intersections, i.e. $\sigma \cap \sigma = \sigma$, and AC-intersection to denote associative, commutative and non-idempotent intersections, $\sigma \cap \sigma \neq \sigma$.

In this paper, the symbols $\alpha$ and $\beta$ will denote type-variables and $\tau$, $\sigma$ and $\psi$ will denote types. The type constructor $\rightarrow$ is assumed to be right associative, and  we assume that $\cap$ binds stronger than $\rightarrow$. All symbols can appear indexed.

Some systems with intersection types consider a universal type constant  $\omega$, corresponding to the empty intersection \cite{barendregt1983filter, coppo1980extension}. To briefly explain the role of this universal type $\omega$, it is essentially used to assign a type to every term, including non-normalising terms. For example, consider the term $(\lambda xy.y)\Omega$, where $\Omega \equiv (\lambda x. xx)(\lambda x. xx)$. $\Omega$ does not have a normal form, since it reduces infinitely to itself ($\Omega \to \Omega \to \cdots$). Nevertheless, although $(\lambda xy.y)\Omega$ contains $\Omega$ as a subterm, it does have a normal form, $(\lambda xy.y)\Omega \to \lambda y. y$. Hence, $(\lambda xy.y)\Omega$ has type $\alpha \to \alpha$, where $\lambda xy.y$ has type $\omega \to \alpha \to \alpha$ and $\Omega$ has type $\omega$. 

\subsection{Expansion and infinite multiplicities}
We now extend the notion of expansion  to relate our $\lambda$-calculus with explicit substitutions with Boudol's resourse calculus with infinite multiplicities. Let us start by defining the notion of expansion context.
\begin{definition} \label[definition]{exp:inf_exp_context}
    An \textit{expansion context A} is any finite set of variable expansions of the form \\
    $A = \{x_1 : \tau_1, \dots, x_n: \tau_n \}$
    where the variables $ x_1, \dots, x_n$ are pairwise distinct and $\tau_1, \dots, \tau_n$ are types.
\end{definition}

Given two expansion contexts, we define an operation that appends them, in the following way. 

\begin{definition} \label[definition]{exp:exp_app_inf_context}
    Let $A_1$ and $A_2$ be two expansion contexts. Then $A_1 \:\&\: A_2$ is a new context such that $x: \tau \in A_1 \:\&\: A_2$ if and only if
    \[
    \tau =
        \begin{cases}
        \tau_1 \cap \tau_2 & \text{if } x : \tau_1 \in A_1 \text{ and } x : \tau_2 \in A_2 \\
        \tau_1         & \text{if } x : \tau_1 \in A_1 \text{ and } \neg\exists \tau . x : \tau \in A_2 \\
        \tau_2         & \text{if } x : \tau_2 \in A_2 \text{ and } \neg\exists \tau . x : \tau \in A_1
        \end{cases}
    \]
\end{definition}
Whenever we write $A \:\&\: \{x : \tau\}$, we assume that $x$ does not occur in $A$.

Let us now formalise the notion of term expansion for ACI-intersection types.

\begin{definition}[Expansion in $\infboudol{\Lambda}$] \label[definition]{exp:def_exp_inf}
    Given a pair $M:\sigma$, where $M$ is a $\lambda$xx-term and $\sigma$ is an intersection type, and a term $N \in \infboudol{\Lambda}$ and an expansion context $A$, we define the  \textit{expansion} relation \\
    $\infexpansion{M:\sigma}{(N,A)}$, as follows.
\[
\begin{array}{rcll}
    \infexpansion{x:\tau} & & {(x, \{x : \tau\})} \\
    \infexpansion{\lambda x. M: \tau_1 \cap \dots \cap \tau_n \to \sigma} & & {(\lambda x. \transboudol{M}, A)} \\
    & &\text{if } x \in \textit{fv}(M) \text{ and} \\
    & &\infexpansion{M: \sigma}{(\transboudol{M}, A \:\&\: \{x : \tau_1 \cap \dots \cap \tau_n \})} \\
    \infexpansion{MN:\sigma}&&{(\transboudol{M} (\infboudol{P_1} \mid \dots \mid \infboudol{P_k}), A_0 \:\&\: A_1 \:\&\: \cdots \:\&\: A_k)} \\
    & &\text{if for some } k>0 \text{ and } \tau_1, \dots, \tau_k \text{ such that} \\
    & &\infexpansion{M:\tau_1 \cap \dots \cap \tau_k \to \sigma}{(\transboudol{M}, A_0)} \text{ and} \\ 
    & &\infexpansion{N: \tau_i}{(P_i, A_i)} \text{ for } 1 \leq i \leq k \\
    \infexpansion{\expsubst{M}{x}{N}:\sigma}&&{(\boudolsubst{\transboudol{M}}{(\infboudol{P_1} \mid \dots \mid \infboudol{P_k}}{x}, A_0 \:\&\: A_1 \:\&\: \cdots \:\&\: A_k)} \\
    & &\text{if for some } k>0, \infexpansion{M: \sigma}{(\transboudol{M}, A_0}), \\
    & &\text{where } A_0(x) = \tau_1 \cap \cdots \cap \tau_k, \text{ and} \\ 
    & &\infexpansion{N: \tau_i}{(P_i, A_i)} \text{ for } 1 \leq i \leq k 

    \end{array}
\]
\end{definition}
We will write $\infexpansion{M:\sigma}{N}$, if $A = \emptyset$.

Note that in the previous definition of expansion (as it happens with the original definition in \cite{florido2004linearization}), when using a pair $(M : \sigma)$, where $M$ is a term and $\sigma$ is an intersection type, one does not require $\sigma$ to be a type derivable for $M$ in an intersection type system. In fact, expansion itself already implicitly supplies a type checking job in the way types are used in its inductive definition. 

\begin{example} \label[example]{exp:exp_inf_example}
    Let $I \equiv \lambda z.z$ and $\sigma \equiv \alpha \to \alpha$. 
    
    We will show how to calculate the expansion of $((\lambda x. xx)I: \alpha \to \alpha)$.
    
    Firstly, we have $\infexpansion{x:\sigma \to \sigma}{(x, \{ x : \sigma \to \sigma \})}$ and $\infexpansion{x:\sigma}{(x, \{ x : \sigma \})}$. Thus, \\
    $\infexpansion{xx:\sigma}{((x \infboudol{x}), \{ x : (\sigma \to \sigma) \cap \sigma \})}$, concluding $\infexpansion{\lambda x.xx: ((\sigma \to \sigma) \cap \sigma) \to \sigma}{\lambda x. (x \infboudol{x})}$.
    
    Now, it is easy to show that $\infexpansion{I: \sigma \to \sigma}{I}$ and $\infexpansion{I: \sigma}{I}$.
    
    Therefore, this results in $\infexpansion{(\lambda x. xx)I: \sigma}{((\lambda x. (x \infboudol{x}))\infboudol{I})}$.
\end{example}

We will now show that expansion using ACI-intersection types preserves weak-head reduction, as the following diagram indicates.
\[
\begin{tikzcd}[row sep=3em, column sep=4em]
M \arrow[r,swap,"\text{bxgc}"] \arrow[d,"\mathcal{E}^\infty"'] &
M' \arrow[d,"\mathcal{E}^\infty"] \\[1em]
M^*
  \arrow[r,
    "{\mathclap{\;\;=\!=\!=\!\boudolequal\;}}",
    phantom
  ]
&
M''
\end{tikzcd}
\]
\begin{theorem}[Expansion and Infinite Multiplicities] \label[theorem]{exp:theo_exp_inf}
    Given a $\lambda$x-term $M$ and an ACI-intersection type $\sigma$, such that $\infexpansion{M:\sigma}{(\transboudol{M}, A_1)}$, if $M \singlearrow M'$ then $\transboudol{M} \boudolequal M'' \text{ and} \infexpansion{M' : \sigma}{(M'', A_2)}$, where $A_2 \subseteq A_1$.
\end{theorem}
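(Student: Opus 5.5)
We argue by induction on the derivation of $M \singlearrow M'$, that is, on the evaluation rules of \cref{expsubst:evaluation} together with the retained base rules $\barrow$, $\xvarrow$, $\xvgcarrow$ and $\gcarrow$. In each case we invert the expansion $\infexpansion{M:\sigma}{(\transboudol{M},A_1)}$ according to the clauses of \cref{exp:def_exp_inf}, build an expansion of $M'$ from the pieces obtained, and show that the two expanded terms are related by $\boudolequal$. The required context inclusion $A_2 \subseteq A_1$ is read as: every variable of $A_2$ occurs in $A_1$, with a type that is a sub-intersection of its type there. This holds because reductions only discard subterms (garbage, unused copies) and never create new free variables.

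\textbf{Base cases.}
\begin{itemize}
\item For $\barrow$, the term $(\lambda x.M)N$ expands to $((\lambda x.\transboudol{M})(\infboudol{P_1}\mid\cdots\mid\infboudol{P_k}))$. Rule E$_3$ reduces this to $\boudolsubst{\transboudol{M}}{\infboudol{P_1}\mid\cdots\mid\infboudol{P_k}}{x}$, which is exactly an expansion of $\expsubst{M}{x}{N}:\sigma$ under the substitution clause. Here $A_0(x)=\tau_1\cap\cdots\cap\tau_k$ comes from the abstraction clause, and the context is unchanged.
\item For $\xvarrow$, the term $\expsubst{x}{x}{N}$ expands to $\boudolsubst{x}{\infboudol{P_1}\mid\cdots}{x}$ with $\sigma=\tau_1\cap\cdots\cap\tau_k$. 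Since $\sigma$ is a single type, idempotence gives $\tau_i=\sigma$ for all $i$. Rule S$_1$ (using $\infboudol{P}\equiv (P\mid\infboudol{P})$) followed by garbage collection gives $P_1$, which is an expansion of $N:\sigma$ with $A_1 \subseteq A_1\&\cdots$.
\item The cases $\xvgcarrow$ and $\gcarrow$ are handled by Boudol's garbage-collection rule. The discarded bag only removes context entries.
\item For the three $\xaparrow$ variants, $\expsubst{(M_1M_2)}{x}{N}$ expands to $\boudolsubst{(\transboudol{M_1}\,\vec{Q})}{\vec{P}}{x}$. The target is expanded by splitting the $\tau_j$ of $A_0(x)$ between the parts of $A_0$ coming from $M_1$ and from the $M_2$-bag. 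Because intersections are idempotent and multiplicities are $\infty$, the same bag $\infboudol{P}$ can be reused on both sides. The required equivalence is then \cref{boudol:subst_app_lemma}(\ref{boudol:subst_app_lemma_inf}), combined with \cref{boudol:subst_lemma}(\ref{boudol:subst_lemma_inf}) to push the substitution into each argument copy $\infboudol{Q_i}$. In the one-sided variants, garbage collection removes the substitution from the side where $x$ does not occur.
\end{itemize}

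\textbf{Contextual cases.}
\begin{itemize}
\item For $MN\singlearrow M'N$, apply the induction hypothesis to $\transboudol{M}$ at type $\tau_1\cap\cdots\cap\tau_k\to\sigma$ and close under E$_1$. The equivalence $\boudolequal$ is a congruence for application because it is generated by E$_1$-closed steps.
\item For the distance rule $(\expsubst{M}{x}{N_1})N_2\singlearrow\expsubst{M'}{x}{N_1}$, the induction hypothesis applied to $MN_2\singlearrow M'$ supplies an expansion of $M'$. We then use E$_4$, or for non-values its $\boudolequal$ analogue through E$_2$, together with $x\notin\textit{fv}(N_2)$ so that the bag of $N_2$ can be moved inside the substitution. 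The context $A_0(x)$ is preserved because the induction hypothesis only shrinks contexts.
\end{itemize}

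\textbf{Main obstacle.} The delicate step is the $\xaparrow$ case. The expansion context of $\expsubst{(M_1M_2)}{x}{N}$ assigns $x$ a single intersection, while the reduct needs two separate substitution clauses, each requiring its own decomposition $A_0(x)=\tau_1\cap\cdots\cap\tau_k$ with $k>0$. To handle this, we prove an auxiliary weakening/splitting lemma. It states that if $\infexpansion{N:\tau_i}{(P_i,A_i)}$ for all $i\in I$, then for any subset $J\subseteq I$ the bag $(\infboudol{P_j})_{j\in J}$ expands $N$ into the substitution clause with context $\&_{j\in J}A_j\subseteq\&_{i\in I}A_i$. Idempotence of $\&$ and of $\infboudol{-}$ ensures that the unsplit and split terms agree up to garbage and duplication, and \cref{boudol:subst_lemma}(\ref{boudol:subst_lemma_inf}) provides the required $\boudolequal$ step.
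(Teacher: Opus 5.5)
Your plan follows the paper's proof: induction on $\singlearrow$, the same base cases, and \cref{boudol:subst_app_lemma} for $\xaparrow$. The $\xaparrow$ case, however, does not go through as you describe it. Idempotence of types does not license reusing the whole bag $(\infboudol{P_1}\mid\cdots\mid\infboudol{P_{k_2}})$ on both sides. The substitution clause of the expansion requires the bag to consist of expansions of $N$ at exactly the components of $A_0(x)$; idempotence lets you repeat those components but not add a type that only the argument side requested. In \cref{exp:theo_exp_inf_example}, for instance, the head copy of $x$ has type $\sigma\to\sigma$ only.

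Your splitting lemma repairs the expansion side. But you then have to show that $\boudolsubst{(\transboudol{M_1}\,(\infboudol{Q_1}\mid\cdots\mid\infboudol{Q_{k_1}}))}{(\infboudol{P_1}\mid\cdots\mid\infboudol{P_{k_2}})}{x}$ is $\boudolequal$ to a term with different sub-bags in the head and in each of the $k_1$ argument components. The cited lemmas do not give this:
\begin{itemize}
\item \cref{boudol:subst_app_lemma} has a single $Q$ and the same $\infboudol{P}$ in all three places.
\item \cref{boudol:subst_lemma} keeps the outer bag unchanged, so it cannot shrink it.
\end{itemize}
Evaluation cannot bridge the gap either. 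No rule removes a component $\infboudol{P_j}$ from a bag: garbage collection deletes a whole substitution, and only when its variable is not free. Weak-head evaluation also never enters bags or abstractions. For example, $\boudolsubst{(\lambda y.L)}{(R\mid S)}{x}$ and $\boudolsubst{(\lambda y.L)}{R}{x}$ with $x\in\textit{fv}(\lambda y.L)$ are both irreducible.

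So ``agree up to garbage and duplication'' is exactly the missing lemma, not a consequence of the cited ones. The paper's proof glosses over the same point by writing a single $\infboudol{P}$ for every bag. One way to close it:
\begin{itemize}
\item Adopt $\infboudol{R}\mid\infboudol{R}\equiv\infboudol{R}$, which \cref{exp:exp_inf_example} already uses when it writes $\infboudol{I}$ for $(\infboudol{I}\mid\infboudol{I})$.
\item Prove, by induction on the term, that any two expansions in $\infboudol{\Lambda}$ of the same term, at whatever types, are $\equiv$.
\end{itemize}
Then every non-empty sub-bag is $\equiv$ to the full bag, all the $Q_i$ collapse to one $Q$, and \cref{boudol:subst_app_lemma} applies verbatim.

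Separately, consider the one-sided variant with $x\notin\textit{fv}(M_2)$. Going through \cref{boudol:subst_app_lemma} leaves $\infboudol{(\boudolsubst{Q}{\infboudol{P}}{x})}$ in argument position, and garbage collection cannot fire there. The relation $\boudolequal$ is closed under the contexts of E$_1$ and E$_2$ but not under bag positions, so you cannot erase that substitution. This variant needs a direct comparison of $\boudolsubst{(\transboudol{M_1}\,\infboudol{Q})}{\infboudol{P}}{x}$ with $(\boudolsubst{\transboudol{M_1}}{\infboudol{P}}{x}\,\infboudol{Q})$, which is what the paper attempts with S$_2$ and E$_1$. For a value head $V$ with $(V\,\infboudol{Q})\boudolarrow K$, rule E$_2$ on the first term and E$_4$ on the second (using $x\notin\textit{fv}(Q)$) reach the same term $\boudolsubst{K}{\infboudol{P}}{x}$. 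Your distance-rule case silently relies on this same commutation. The induction hypothesis gives only a conversion $(\transboudol{M}\,\infboudol{Q})\boudolequal M''$, whereas E$_4$ needs a single step out of a value head.
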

\begin{proof}
    By structural induction on the reduction $\singlearrow$.
    \begin{itemize}
    \item Base case:
    \begin{itemize}
        \item Given $\infexpansion{(\lambda x.M)N: \sigma}{(((\lambda x. \transboudol{M})\infboudol{P}), A_0 \& A_1 \& \cdots \& A_k)}$, because
            \[ \infexpansion{\lambda x. M : \tau_1 \cap \cdots \cap \tau_k \to \sigma}{(\lambda x. \transboudol{M}, A_0)} \]
            where $\infexpansion{M:\sigma}{(\transboudol{M}, A_0 \& \{x : \tau_1 \cap \cdots \cap \tau_k\})}$, for some $k > 0$, and $\infexpansion{N: \tau_i}{(P_i, A_i)}$ with $1 \leq i \leq k$.

            We know that $(\lambda x. M)N \singlearrow \expsubst{M}{x}{N}$, and \\
            $\infexpansion{\expsubst{M}{x}{N}: \sigma}{(\boudolsubst{\transboudol{M}}{\infboudol{P}}{x}, A_0 \& A_1 \& \cdots \& A_k)}$.

            We also know that $((\lambda x. \transboudol{M})\infboudol{P}) \boudolarrow \boudolsubst{\transboudol{M}}{\infboudol{P}}{x}$ by rule E$_3$.

        \item Given $\infexpansion{\expsubst{x}{x}{N}:\sigma}{(\boudolsubst{x}{\infboudol{P}}{x}, A)}$, because $\infexpansion{x : \sigma}{x} \text{ and } \infexpansion{N : \sigma}{(P, A)}$.

            We have $\expsubst{x}{x}{N} \singlearrow N$, and $\infexpansion{N : \sigma}{(P, A)}$.

            We also know that $\boudolsubst{x}{\infboudol{P}}{x} \boudolarrow \boudolsubst{P}{\infboudol{P}}{x}$ by rule S$_1$.

            Since $x \notin \textit{fv}(P)$, by the garbage collection rule, we have $\boudolsubst{P}{\infboudol{P}}{x} \boudolarrow P$.
        
        \item Given $\infexpansion{\expsubst{M}{x}{N}: \sigma}{(\boudolsubst{\transboudol{M}}{\infboudol{P}}{x}, A_0 \& A_1 \& \cdots \& A_k)}$, because $\infexpansion{M:\sigma}{(\transboudol{M}, A_0)}$ and $A(x) = \tau_1 \cap \cdots \tau_k$, $\infboudol{N:\tau_i}{(P_i, A_i)}$, where $1 \leq i \leq k$.

            We have $\expsubst{M}{x}{N} \singlearrow M$ if $x \notin \textit{fv}(M)$, and $\infexpansion{M:\sigma}{(\transboudol{M}, A_0)}$.

            Since $x \notin \textit{fv}(\transboudol{M})$ then, by garbage collection, $\boudolsubst{\transboudol{M}}{\infboudol{P}}{x} \boudolarrow \transboudol{M}$, and $A_0 \subseteq A_0 \& \cdots \& A_k$.

        \item Given $\infexpansion{\expsubst{(M_1 M_2)}{x}{N}: \sigma}{(\boudolsubst{(\transboudol{M_1}\infboudol{Q})}{\infboudol{P}}{x}, A \& B_1 \& \cdots \& B_{k_2})}$, where $x \in \textit{fv}(M_1)$ and $x \in \textit{fv}(M_2)$, because $\infexpansion{M_1M_2: \sigma}{((\transboudol{M_1}\infboudol{Q}), A_0 \& A_1 \& \cdots \& A_{k_1})}$ where, for some $k_1>0$, $\infexpansion{M_1: \tau_1 \cap \cdots \cap \tau_{k_1} \to \sigma}{(\transboudol{M_1}, A_0)}  \text{ and } \infexpansion{M_2 : \tau_i}{(Q_i, A_i)} \text{, where } 1 \leq i \leq k_1$
            and, for some $k_2 > 0$ and $x : \sigma_1 \cap \cdots \cap \sigma_{k_2}$ with $A = A_1 \& \cdots \& A_{k_1}$.
            
            Also, $\infexpansion{N : \sigma_j}{(P_j, B_j)} \text{, where } 1 \leq j \leq k_2$.

            We know that $\expsubst{(M_1 M_2)}{x}{N} \singlearrow \expsubst{M_1}{x}{N} \expsubst{M_2}{x}{N}$.

            Now, $\infexpansion{\expsubst{M_1}{x}{N} \expsubst{M_2}{x}{N} : \sigma}{((\boudolsubst{\transboudol{M_1}}{\infboudol{P}}{x}\infboudol{(\boudolsubst{Q}{\infboudol{P}}{x})}), A \& B_1 \& \cdots \& B_{k_2})}$.
            
            By \cref{boudol:subst_app_lemma}.\ref{boudol:subst_app_lemma_inf}, we have that $\boudolsubst{(\transboudol{M_1}\infboudol{Q})}{\infboudol{P}}{x} \boudolequal (\boudolsubst{\transboudol{M_1}}{\infboudol{P}}{x}\infboudol{(\boudolsubst{Q}{\infboudol{P}}{x})})$. 

        \item Given $\infexpansion{\expsubst{(M_1 M_2)}{x}{N}: \sigma}{(\boudolsubst{(\transboudol{M_1}\infboudol{Q})}{\infboudol{P}}{x}, A \& B_1 \& \cdots \& B_{k_2})}$, where $x \in \textit{fv}(M_1)$ and $x \notin \textit{fv}(M_2)$, because $\infexpansion{M_1M_2: \sigma}{((\transboudol{M_1}\infboudol{Q}), A_0 \& A_1 \& \cdots \& A_{k_1})}$ where, for some $k_1>0$, $\infexpansion{M_1: \tau_1 \cap \cdots \cap \tau_{k_1} \to \sigma}{(\transboudol{M_1}, A_0)}  \text{ and } \infexpansion{M_2 : \tau_i}{(Q_i, A_i)} \text{, where } 1 \leq i \leq k_1$
            and, for some $k_2 > 0$ and $x : \sigma_1 \cap \cdots \cap \sigma_{k_2}$, with $A = A_0 \& \cdots \& A_{k_1}$.
            
            Also $\infexpansion{N : \sigma_j}{(P_j, B_j)} \text{, where } 1 \leq j \leq k_2$.

            Since we know that $x \notin \textit{fv}(M_2)$, by rule S$_2$, we have $\boudolsubst{(\transboudol{M_1}\infboudol{Q})}{\infboudol{P}}{x} \boudoldblarrow (M_1{'}\infboudol{Q})$ because $\boudolsubst{\transboudol{M_1}}{\infboudol{P}}{x} \boudoldblarrow M_1{'}$.
    
            We know that $\expsubst{(M_1 M_2)}{x}{N} \singlearrow \expsubst{M_1}{x}{N} M_2$. And we have \\
            $\infexpansion{\expsubst{M_1}{x}{N}M_2:\sigma}{((\boudolsubst{\transboudol{M_1}}{\infboudol{P}}{x} \infboudol{Q}), A \& B_1 \& \cdots \& B_{k_2})}$. Since we know that \\
            $\boudolsubst{\transboudol{M_1}}{\infboudol{P}}{x} \boudoldblarrow M_1{'}$ then, by rule E$_1$, $(\boudolsubst{\transboudol{M_1}}{\infboudol{P}}{x} \infboudol{Q}) \boudoldblarrow (M_1{'}\infboudol{Q})$.

        \item Given $\infexpansion{\expsubst{(M_1 M_2)}{x}{N}: \sigma}{(\boudolsubst{(\transboudol{M_1}\infboudol{Q})}{\infboudol{P}}{x}, A \& B_1 \& \cdots \& B_{k_2})}$, where $x \notin \textit{fv}(M_1)$ and $x \in \textit{fv}(M_2)$, because $\infexpansion{M_1M_2: \sigma}{((\transboudol{M_1}\infboudol{Q}), A_0 \& A_1 \& \cdots \& A_{k_1})}$ where, for some $k_1>0$, $\infexpansion{M_1: \tau_1 \cap \cdots \cap \tau_{k_1} \to \sigma}{(\transboudol{M_1}, A_0)}  \text{ and } \infexpansion{M_2 : \tau_i}{(Q_i, A_i)} \text{, where } 1 \leq i \leq k_1$
            and, for some $k_2 > 0$ and $x : \sigma_1 \cap \cdots \cap \sigma_{k_2}$, with $A = A_0 \& \cdots \& A_{k_1}$.
            
            Also $\infexpansion{N : \sigma_j}{(P_j, B_j)} \text{, where } 1 \leq j \leq k_2$.

            Since $x \notin \textit{fv}(M_1)$, then $\transboudol{M_1} \equiv \lambda y. \transboudol{N_1}$, and we have \\
            $\boudolsubst{((\lambda y. \transboudol{N_1})\infboudol{Q})}{\infboudol{P}}{x} \boudolarrow \boudolsubst{(\boudolsubst{\transboudol{N_1}}{\infboudol{Q}}{y})}{\infboudol{P}}{x}$ by rule E$_2$.

            We know that $\expsubst{(M_1 M_2)}{x}{N} \singlearrow M_1 \expsubst{M_2}{x}{N}$. And \\
            $\infexpansion{M_1 \expsubst{M_2}{x}{N}:\sigma}{((\transboudol{M_1}\infboudol{(\boudolsubst{Q}{\infboudol{P}}{x})}), A \& B_1 \& \cdots \& B_{k_2})}$. From which we get, since $\transboudol{M_1} \equiv \lambda y. \transboudol{N_1}$,
            $((\lambda y. \transboudol{N_1})\infboudol{(\boudolsubst{Q}{\infboudol{P}}{x})}) \boudolarrow \boudolsubst{\transboudol{N_1}}{\infboudol{(\boudolsubst{Q}{\infboudol{P}}{x})}}{y}$. 

            Since $x \notin \textit{fv}(M_1)$, then $\boudolsubst{(\boudolsubst{\transboudol{N_1}}{\infboudol{Q}}{y})}{\infboudol{P}}{x} \boudolequal \boudolsubst{\transboudol{N_1}}{\infboudol{(\boudolsubst{Q}{\infboudol{P}}{x})}}{y}$.
    \end{itemize}
    \item Inductive case: $\infexpansion{MN:\sigma}{((\transboudol{M}\infboudol{Q}), A_0 \& A_1 \& \cdots \& A_k)}$ and \\
    $\infexpansion{(\expsubst{M}{x}{N_1})N_2 : \sigma}{((\boudolsubst{\transboudol{M}}{\infboudol{P}}{x}\infboudol{Q}), A_0 \& A_1 \& \cdots \& A_{k_1} \& B_1 \& \cdots \& B_{k_2})}$ follow easily by induction.
\end{itemize}

\end{proof}

\begin{example} \label[example]{exp:theo_exp_inf_example}
    Starting from the term $\expsubst{(xx)}{x}{I}$, where $I \equiv \lambda z.z$ and $\sigma \equiv \alpha \to \alpha$.

    We know that $\expsubst{(xx)}{x}{I} \singlearrow \expsubst{x}{x}{I} (\expsubst{x}{x}{I})$ and, from $\infexpansion{x: \sigma \to \sigma}{(x, \{x:\sigma \to \sigma\})}$ and $\infexpansion{x : \sigma}{(x, \{x:\sigma\})}$, we have $\infexpansion{xx:\sigma}{((x\infboudol{x}), \{x : (\sigma \to \sigma) \cap \sigma)}$, and from \\
    $\infexpansion{I:\sigma \to \sigma}{I}$ and $\infexpansion{I:\sigma}{I}$, thus $\infexpansion{\expsubst{(xx)}{x}{I}:\sigma}{\boudolsubst{(x\infboudol{x})}{\infboudol{I}}{x}}$

    \[
    \begin{array}{rcll}
        \boudolsubst{(x\infboudol{x})}{\infboudol{I}}{x} \boudolarrow \boudolsubst{(I\infboudol{x})}{\infboudol{I}}{x} \boudolarrow \boudolsubst{(\boudolsubst{z}{\infboudol{x}}{z})}{\infboudol{I}}{x} &\boudolarrow& \boudolsubst{(\boudolsubst{x}{\infboudol{x}}{z})}{\infboudol{I}}{x} \\
        &\boudolarrow& \boudolsubst{(\boudolsubst{I}{\infboudol{I}}{x})}{\infboudol{x}}{z} \boudoldblarrow I 
    \end{array}
    \]

    We also have that, from $\infexpansion{x:\sigma \to \sigma}{(x, \{x : \sigma \to \sigma \})}$ and $\infexpansion{I:\sigma \to \sigma}{I}$, we have $\infexpansion{\expsubst{x}{x}{I}:\sigma \to \sigma}{\boudolsubst{x}{\infboudol{I}}{x}}$, and, from $\infexpansion{x : \sigma}{(x, \{x : \sigma\})}$ and $\infexpansion{I : \sigma}{I}$, we have $\infexpansion{\expsubst{x}{x}{I}:\sigma}{\boudolsubst{x}{\infboudol{I}}{x}}$. Thus $\infexpansion{\expsubst{x}{x}{I} \expsubst{x}{x}{I}: \sigma}{(\boudolsubst{x}{\infboudol{I}}{x}\infboudol{(\boudolsubst{x}{\infboudol{I}}{x})})}$.

    \[
    \begin{array}{rcll}
        (\boudolsubst{x}{\infboudol{I}}{x}\infboudol{(\boudolsubst{x}{\infboudol{I}}{x})}) &\boudolarrow& (\boudolsubst{I}{\infboudol{I}}{x}\infboudol{(\boudolsubst{x}{\infboudol{I}}{x})}) \\
        &\boudolarrow& \boudolsubst{(\boudolsubst{z}{\infboudol{(\boudolsubst{x}{\infboudol{I}}{x})}}{z})}{\infboudol{I}}{x} \\
        &\boudolarrow& \boudolsubst{(\boudolsubst{(\boudolsubst{x}{\infboudol{I}}{x})}{\infboudol{(\boudolsubst{x}{\infboudol{I}}{x})}}{z})}{\infboudol{I}}{x} \\
        &\boudolarrow& \boudolsubst{(\boudolsubst{(\boudolsubst{I}{\infboudol{I}}{x})}{\infboudol{(\boudolsubst{x}{\infboudol{I}}{x})}}{z})}{\infboudol{I}}{x} \boudoldblarrow I
    \end{array}
    \]
    Therefore $\boudolsubst{(x\infboudol{x})}{\infboudol{I}}{x} \boudolequal (\boudolsubst{x}{\infboudol{I}}{x}\infboudol{(\boudolsubst{x}{\infboudol{I}}{x})})$.
\end{example}

\begin{corollary} \label[corollary]{exp:cor_exp_inf}
    Given a $\lambda$x-term $M$ and an ACI-intersection type $\sigma$, such that $\infexpansion{M: \sigma}{\transboudol{M}}$, if $M \dblarrow V$ then $\transboudol{M} \boudolequal V'$ and $\infexpansion{V:\sigma}{V'}$, where $V$ and $V'$ are weak-head normal forms.
\end{corollary}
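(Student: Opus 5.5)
The plan is to iterate \cref{exp:theo_exp_inf} along the reduction sequence $M \dblarrow V$, by induction on its length $n$, with the empty context $A = \emptyset$ at the start. The invariant to maintain is: if $M \singlearrow^{n} M_n$, then there exist $N_n$ and $A_n \subseteq \emptyset$ with $\transboudol{M} \boudolequal N_n$ and $\infexpansion{M_n:\sigma}{(N_n,A_n)}$. Since $A_n \subseteq \emptyset$ forces $A_n = \emptyset$, this is exactly $\infexpansion{M_n:\sigma}{N_n}$.

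For $n=0$, take $N_0 = \transboudol{M}$; reflexivity of $\boudolequal$ gives the claim. For the inductive step, suppose $M \singlearrow^{n} M_n \singlearrow M_{n+1}$ and the invariant holds for $M_n$ with $N_n$. Apply \cref{exp:theo_exp_inf} to $\infexpansion{M_n:\sigma}{(N_n,\emptyset)}$ and the step $M_n \singlearrow M_{n+1}$. This yields $N_{n+1}$ with $N_n \boudolequal N_{n+1}$ and $\infexpansion{M_{n+1}:\sigma}{(N_{n+1},A')}$, where $A' \subseteq \emptyset$. Because $\boudolequal$ is by definition the least equivalence relation containing $\boudolarrow$, it is transitive, so $\transboudol{M} \boudolequal N_{n+1}$. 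Taking $n$ to be the length of the given sequence, $V = M_n$, and setting $V' = N_n$ gives $\transboudol{M} \boudolequal V'$ and $\infexpansion{V:\sigma}{V'}$.

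The remaining obligation, and the one I expect to be the main obstacle, is that $V'$ is itself a weak-head normal form (a value in the sense of \cref{boudol:terms}). I would argue this by a small lemma on the shape of expansions of values, proved by structural induction on $V$. By the remark after \cref{expsubst:mod_compare_ex}, a closed weak-head normal form of the evaluation in \cref{expsubst:evaluation} is either $\lambda x.M_0$ or $\expsubst{W}{y}{N}$ with $W$ again such a form.
\begin{itemize}
\item In the abstraction case, \cref{exp:def_exp_inf} forces the expansion to be $\lambda x.\transboudol{M_0}$, which is a value.
\item In the substitution case, the expansion is $\boudolsubst{W^*}{(\infboudol{P_1}\mid\cdots\mid\infboudol{P_k})}{y}$ with $W^*$ an expansion of $W$. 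By induction, $W^*$ is a value, so the whole term matches the production $V ::= \boudolsubst{V}{P}{x}$.
\end{itemize}
There is a subtlety: the definition of expansion only handles substitution when $k>0$, i.e.\ when $y$ actually occurs in the body. Garbage substitutions must therefore be checked not to arise in a typed value, or else be absorbed by the garbage collection step, which keeps $V'$ a value. If needed, I would strengthen the induction hypothesis to carry closedness of $V$, inherited from $M$, since evaluation preserves closedness.
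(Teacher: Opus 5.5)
Your proposal is correct and follows the paper's implicit argument: the paper states the corollary without proof as a direct consequence of \cref{exp:theo_exp_inf}, and it is obtained exactly by iterating that theorem along $M \dblarrow V$ from the empty context, using $A_2 \subseteq \emptyset$ and the transitivity of $\boudolequal$. Your value-shape lemma only makes explicit what the paper leaves unsaid, and your garbage-substitution worry is moot: an expansion of $\expsubst{W}{y}{N}$ can only come from the substitution clause, so it already has the form $\boudolsubst{W^*}{P}{y}$; likewise, closedness of $V$ follows from its empty expansion context rather than needing to be assumed of $M$.
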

\begin{example} \label[example]{exp:cor_exp_inf_example}
    Following the term in \cref{exp:theo_exp_inf_example}, we have 
    \[
    \begin{array}{rcll}
         \expsubst{(xx)}{x}{I} &\singlearrow& \expsubst{x}{x}{I} (\expsubst{x}{x}{I}) \\
         &\singlearrow& I(\expsubst{x}{x}{I}) \singlearrow \expsubst{y}{y}{(\expsubst{x}{x}{I})} \singlearrow \expsubst{x}{x}{I} \singlearrow I 
    \end{array}
    \]
    And we already know that $\boudolsubst{(x\infboudol{x})}{\infboudol{I}}{x} \boudoldblarrow I$ and $\infexpansion{I:\sigma}{I}$.
\end{example}

\subsection{Expansion and finite multiplicities}
Let us start by redefining the notion of expansion context, to consider finite multiplicities.
\begin{definition} \label[definition]{exp:fin_exp_context}
    An \textit{expansion context A} is any finite set of variable expansions \\
    $A = \{x_1 : \tau_1{^{m_1}}, \dots, x_n: \tau_n{^{m_n}} \}$
    where the variables $x_1, \dots, x_n $ are pairwise distinct, $\tau_1, \dots, \tau_n$ are types and $m_1, \dots,  m_n$ are multiplicities.
\end{definition}

We now define an operation that appends two expansion contexts. 

\begin{definition} \label[definition]{exp:exp_app_fin_context}
    Let $A_1$ and $A_2$ be two expansion contexts. Then $A_1 \otimes A_2$ is a new context such that $x: \tau \in A_1 \otimes A_2$ if and only if
    \[
    \tau =
        \begin{cases}
        (\tau_1 \cap \tau_2){^{m_1 + m_2}} & \text{if } x : \tau_1{^{m_1}} \in A_1 \text{ and } x : \tau_2{^{m_2}} \in A_2 \\
        \tau_1{^{m_1}}         & \text{if } x : \tau_1{^{m_1}} \in A_1 \text{ and } \neg\exists \tau . x : \tau \in A_2 \\
        \tau_2{^{m_2}}         & \text{if } x : \tau_2{^{m_2}} \in A_2 \text{ and } \neg\exists \tau . x : \tau \in A_1
        \end{cases}
    \]
\end{definition}

Whenever we write $A \otimes \{x : \tau\}$, we assume that $x$ does not occur in $A$. We are now able to formalise the notion of term expansion for AC-intersection types.

\begin{definition}[Expansion in $\Lambda$] \label[label]{exp:def_exp_fin}
    Given a pair $M:\sigma$, where $M$ is a $\lambda$-term and $\sigma$ an intersection type, and a term $N \in \Lambda$ and an expansion context $A$, we define the \textit{expansion} relation $\finexpansion{M:\sigma}{(N, A)}$, as follows.
\[
\begin{array}{rcll}
    \finexpansion{x:\tau^1}&&{(x, \{x : \tau^1\})} \\
    \finexpansion{\lambda x. M: \tau_1{^{m_1}} \cap \dots \cap \tau_n{^{m_n}} \to \sigma}&&{(\lambda x. \transboudol{M}, A)} \\
    & &\text{if } x \in \textit{fv}(M) \text{ and } \\
    & &\finexpansion{M: \sigma}{(\transboudol{M}, A \otimes \{x : \tau_1{^{m_1}} \cap \dots \cap \tau_n{^{m_n}} \})} \\
    \finexpansion{MN:\sigma}&&{((\transboudol{M} (\finboudol{P_1}{m_1} \mid \dots \mid \finboudol{P_k}{m_k})), A_0 \otimes A_1 \otimes \dots \otimes A_k)} \\
    & &\text{if for some } k>0 \text{ and } \tau_1{^{m_1}}, \dots, \tau_k{^{m_k}} \text{ such that } \\
    & &\finexpansion{M:\tau_1{^{m_1}} \cap \dots \cap \tau_k{^{m_k}} \to \sigma}{(\transboudol{M}, A_0)} \text{ and} \\ 
    & & \finexpansion{N: \tau_i{^1}}{(P_i, A_i)} \text{ for } 1 \leq i \leq k \\
    \finexpansion{\expsubst{M}{x}{N}:\sigma}&&{(\boudolsubst{\transboudol{M}}{(\finboudol{P_1}{m_1} \mid \dots \mid \finboudol{P_k}{m_k})}{x}, A_0 \otimes A_1 \otimes \dots \otimes A_k)} \\
    & &\text{if for some } k>0, \ \finexpansion{M: \sigma}{(\transboudol{M}, A_0)}, \\
    & & \text{where } A_0(x) = \typemult{(\tau_1 \cap \cdots \cap \tau_k)}{m_1 + \cdots + m_k} \text{, and } \\
    & & \finexpansion{N: \tau_i{^1}}{(P_i, A_i)} \text{ for } 1 \leq i \leq k 
    \end{array}
\]
\end{definition}

Note that we will write $\finexpansion{M:\sigma}{N}$ if $A = \emptyset$.
\begin{example} \label[example]{exp:exp_fin_example}
    Let $I \equiv \lambda z.z$ and $\sigma \equiv \alpha \to \alpha$. 
    
    We will show how to calculate the expansion of $((\lambda x. xx)I: \alpha \to \alpha)$.
    
    Firstly, we have $\finexpansion{x:\sigma \to \sigma}{(x, \{x : \typemult{(\sigma \to \sigma)}{1}\})}$ and $\finexpansion{x:\sigma}{(x, \{x: \typemult{\sigma}{1}\})}$. Thus, $\finexpansion{xx:\sigma}{((x \finboudol{x}{1}), \{x : (\sigma \to \sigma) \cap \sigma\})}$, concluding $\finexpansion{\lambda x.xx: ((\sigma \to \sigma) \cap \sigma) \to \sigma}{\lambda x. (x \finboudol{x}{1})}$.
    
    Now, it is easy to show that $\finexpansion{I: \sigma \to \sigma}{I}$, where $\finexpansion{y:\sigma \to \sigma}{(y, \{y : \typemult{(\sigma \to \sigma)}{1}\})}$, and $\finexpansion{I: \sigma}{I}$, where $\finexpansion{y : \sigma}{(y, \{y : \typemult{\sigma}{1}\})}$. Therefore, this results in $\finexpansion{(\lambda x. xx)I: \sigma}{((\lambda x. (x \finboudol{x}{1}))\finboudol{I}{2})}$.
\end{example}

Now, we will now show that expansion using AC-intersection types preserves weak-head reduction, as the following diagram indicates.

\[
\begin{tikzcd}[row sep=3em, column sep=4em]
M \arrow[r,swap,"\text{bxgc}"] \arrow[d,"\mathcal{E}"'] &
M' \arrow[d,"\mathcal{E}"] \\[1em]
M^*
  \arrow[r,
    "{\mathclap{\;\;=\!=\!=\!\boudolequal\;}}",
    phantom
  ]
&
M''
\end{tikzcd}
\]
\begin{theorem}[Expansion and Finite Multiplicities] \label[theorem]{exp:theo_exp_fin}
    Given a $\lambda$x-term $M$ and an AC-intersection type $\sigma$, such that $\finexpansion{M:\sigma}{(\transboudol{M}, A_1)}$, if $M \singlearrow M'$ then $\transboudol{M} \boudolequal M''$ and $\finexpansion{M' : \sigma}{(M'', A_2)}$, where $A_2 \subseteq A_1$. 
\end{theorem}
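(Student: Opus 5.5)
The proof mirrors that of \cref{exp:theo_exp_inf}: we proceed by structural induction on the derivation of $M \singlearrow M'$, using the rules of \cref{expsubst:reduction} as modified in \cref{expsubst:evaluation}. In each case we invert the last rule of \cref{exp:def_exp_fin} to recover the component expansions, their contexts and the multiplicities. We then build an expansion of $M'$ from the same components and show that the two expanded terms are $\boudolequal$-related. Throughout, the multiplicity of a variable in the context should equal the number of expansion leaves using it, with each leaf contributing $\tau^1$. This holds because the variable rule introduces only $\tau^1$, and $\otimes$ adds multiplicities. We therefore establish first, by a routine induction on \cref{exp:def_exp_fin}, the auxiliary fact that $A(x)=\typemult{(\tau_1\cap\cdots\cap\tau_k)}{m}$ implies that $x$ occurs exactly $m$ times in $\transboudol{M}$ outside bags, counted with bag multiplicities.

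We treat the base cases in turn.
\begin{itemize}
\item $\barrow$: inversion of the application and abstraction rules gives $((\lambda x.\transboudol{M})(\finboudol{P_1}{m_1}\mid\cdots\mid\finboudol{P_k}{m_k}))$. Rule E$_3$ sends this to exactly the expansion of $\expsubst{M}{x}{N}$, with the same context.
\item $\xvarrow$: here $x:\tau^1$ forces $k=1$ and $m_1=1$, so the expansion is $\boudolsubst{x}{\finboudol{P}{1}}{x}$. Rule S$_1$ consumes the unique copy, leaving $\boudolsubst{P}{1}{x}\equiv P$, which expands $N$ in context $A_1\subseteq A_0\otimes A_1$. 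This case is cleaner than in the infinite setting, since no garbage collection is needed.
\item $\xvgcarrow$ and $\gcarrow$: if $x\notin\textit{fv}(M)$, then $A_0(x)$ is undefined. Read literally, the substitution rule then needs $k>0$ with $A_0(x)$ defined, so this case is either vacuous or resolved by the Boudol garbage-collection rule applied to $\boudolsubst{\transboudol{M}}{P}{x}$, which yields $\transboudol{M}$ with context $A_0\subseteq A_0\otimes\cdots$. We will state explicitly which reading we adopt.
\end{itemize}

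The main work lies in the three $\xaparrow$ cases. In $\expsubst{(M_1M_2)}{x}{N}$ the bag $\finboudol{P}{m}$ attached to $x$ must be split between $M_1$ and $M_2$. By the auxiliary fact, $A_0(x)$ decomposes as the $\otimes$ of $x$'s entries in the contexts of $M_1$ and of the $Q_i$, and the multiplicities $s$ and $m-s$ give the split of the resources. We regroup the expansions of $N$, that is, the $P_j$ with their types, accordingly. We then invoke \cref{boudol:subst_app_lemma}.\ref{boudol:subst_app_lemma_fin} with this $s$ to obtain
\[ \boudolsubst{(\transboudol{M_1}\finboudol{Q}{n})}{\finboudol{P}{m}}{x} \boudolequal (\boudolsubst{\transboudol{M_1}}{\finboudol{P}{s}}{x}\, \finboudol{(\boudolsubst{Q}{\finboudol{P}{m-s}}{x})}{n}). \]
Here the right-hand side is precisely an expansion of $\expsubst{M_1}{x}{N}\expsubst{M_2}{x}{N}$. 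When $x\notin\textit{fv}(M_2)$ we take $s=m$ and use S$_2$ and E$_1$. When $x\notin\textit{fv}(M_1)$ we take $s=0$ and use \cref{boudol:subst_lemma}.\ref{boudol:subst_lemma_fin}, as in the infinite proof. The main obstacle is this bookkeeping. It requires showing that the split of $\finboudol{P_1}{m_1}\mid\cdots\mid\finboudol{P_k}{m_k}$ is compatible with the nonidempotent intersection $\tau_1\cap\cdots\cap\tau_k$, so that each sub-bag is again a legal expansion. It also requires checking that the resulting context is contained in the original one. The first thing we would try is to index by individual type occurrences, so that each copy of $P_j$ corresponds to one $\tau_j^1$ leaf.

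The contextual cases, covering the left of an application and reduction at a distance, follow from the induction hypothesis, using E$_1$ and E$_4$ respectively, together with monotonicity of $\otimes$ with respect to $\subseteq$.
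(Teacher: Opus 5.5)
Your overall route is the paper's. You do induction on $M\singlearrow M'$ with the same base cases: E$_3$ for substitution generation, S$_1$ for $\expsubst{x}{x}{N}$, a split $s$ / $m-s$ of the bag fed into part 2 of \cref{boudol:subst_app_lemma}, S$_2$ with E$_1$ when $x\notin\textit{fv}(M_2)$, the substitution lemma when $x\notin\textit{fv}(M_1)$, and the induction hypothesis for the contextual rules.

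The one step that fails is your auxiliary fact, which is false for expansion in $\Lambda$ as defined. Take $k=1$, $m_1=2$ in the application clause. Expanding $f$ at $\tau^{2}\to\sigma$ gives $(f,\{f:(\tau^{2}\to\sigma)^{1}\})$, and expanding $y$ at $\tau^{1}$ gives $(y,\{y:\tau^{1}\})$. So $fy:\sigma$ expands to $(f\,y^{2},\ \{f:(\tau^{2}\to\sigma)^{1},\ y:\tau^{1}\})$. Here $y$ has multiplicity $1$ in the context but occurs twice in the bag $y^{2}=(y\mid y)$, and zero times outside it, so neither reading of your count matches. The cause is that the application and substitution clauses put $P_i$ into the bag $m_i$ times, while $A_i$ enters the $\otimes$ only once. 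What does hold is your informal version: the multiplicity of $x$ counts the variable leaves of the expansion \emph{derivation}. That is not the number of occurrences of $x$ in $\transboudol{M}$, and the "routine induction'' breaks at those two clauses whenever some $m_i>1$. The mismatch is built into the definition, so it cannot be proved away.

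You do not need this fact. In the paper the split comes directly from $\otimes$. The context of $M_1M_2$ is $A_0\otimes(A_1\otimes\cdots\otimes A_{k_1})$, so by the definition of $\otimes$ its entry for $x$ is $(\psi_1\cap\psi_2)^{s+(m-s)}$, with $x:\psi_1^{s}\in A_0$ and $x:\psi_2^{m-s}$ in $A_1\otimes\cdots\otimes A_{k_1}$. Part 2 of \cref{boudol:subst_app_lemma} is stated for any $s$, with no hypothesis tying $s$ to occurrences of $x$. The bookkeeping you call the main obstacle is closed along the line you suggest. Since $\cap$ is AC and non-idempotent, the expansions $N:\sigma_j\mapsto(P_j,B_j)$ split into two groups:
\begin{itemize}
\item those whose types make up $\psi_1$ ("$N$ is expanded $s$ times with $\psi_1$''), which expand $\expsubst{M_1}{x}{N}$ with context $A_0\otimes C_1$;
\item those whose types make up $\psi_2$, which expand $\expsubst{M_2}{x}{N}$ with context $A_1\otimes\cdots\otimes A_{k_1}\otimes C_2$.
\end{itemize}
Here $C_1,C_2\subseteq B_1\otimes\cdots\otimes B_{k_2}$. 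Replacing your auxiliary fact with this definitional split gives the paper's argument.

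Two smaller remarks. Your handling of the garbage-collection cases is more careful than the paper's, whose finite-multiplicity proof omits them. For the distance rule, E$_4$ only fires when the term under the substitution is a value. The $M$ in $(\expsubst{M}{x}{N_1})N_2$ need not be one (it may be an application), so that case also needs E$_2$ and a commutation of the substitution with the argument bag; the paper only says it follows by induction.
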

\begin{proof}
    By induction on the definition on the reduction $\singlearrow$.
    \begin{itemize}
    \item Base case:
    \begin{itemize}
        \item Given $\finexpansion{(\lambda x.M)N: \sigma}{(((\lambda x. \transboudol{M})\finboudol{P}{m_1 + \cdots + m_k}), A_0 \otimes A_1 \otimes \cdots \otimes A_k)}$, because
            \[ \finexpansion{\lambda x. M : \typemult{\tau_1}{m_1} \cap \cdots \cap \typemult{\tau_k}{m_k} \to \sigma}{(\lambda x. \transboudol{M}, A_0)} \]
            where $\finexpansion{M:\sigma}{(\transboudol{M}, A_0 \otimes \{x : \typemult{\tau_1}{m_1} \cap \cdots \cap \typemult{\tau_k}{m_k} \})}$, for some $k > 0$ \\
            and $\finexpansion{N: \tau_i}{(P_i, A_i)}$ with $1 \leq i \leq k$.

            We know that $(\lambda x. M)N \singlearrow \expsubst{M}{x}{N}$, and \\
            $\finexpansion{\expsubst{M}{x}{N}: \sigma}{(\boudolsubst{\transboudol{M}}{\finboudol{P}{m_1 + \cdots + m_k}}{x}, A_0 \otimes A_1 \otimes \cdots \otimes A_k)}$.

            We also know that $((\lambda x. \transboudol{M})\finboudol{P}{m_1 + \cdots + m_k}) \boudolarrow \boudolsubst{\transboudol{M}}{\finboudol{P}{m_1 + \cdots + m_k}}{x}$ by rule E$_3$.

        \item Given $\finexpansion{\expsubst{x}{x}{N}:\sigma}{(\boudolsubst{x}{P}{x}, A)}$, because $\finexpansion{x : \sigma}{x}$ and $\finexpansion{N : \sigma}{(P, A)}$.

            We have $\expsubst{x}{x}{N} \singlearrow N$, and $\finexpansion{N : \sigma}{(P,A)}$.

            We also know that $\boudolsubst{x}{P}{x} \boudolarrow P$ by rule S$_1$.

        \item Given $\finexpansion{\expsubst{(M_1 M_2)}{x}{N}: \sigma}{(\boudolsubst{(\transboudol{M_1}\finboudol{Q}{m_1 + \cdots + m_{k_1}})}{\finboudol{P}{n_1 + \cdots + n_{k_2}}}{x}, A \otimes B_1 \otimes \cdots \otimes B_{k_2})}$, where $x \in \textit{fv}(M_1)$ and $x \in \textit{fv}(M_2)$, because $\finexpansion{M_1M_2: \sigma}{(\transboudol{M_1}\finboudol{Q}{m_1 + \cdots + m_{k_1}}, A_0 \otimes A_1 \otimes \cdots \otimes A_{k_1})}$
            where, for some $k_1>0$, $\finexpansion{M_1: \typemult{\tau_1}{m_1} \cap \cdots \cap \typemult{\tau_{k_1}}{m_{k_1}} \to \sigma}{(\transboudol{M_1}, A_0)}$ and $\finexpansion{M_2 : \tau_i}{(Q_i, A_i)}$, where $1 \leq i \leq k_1$
            and, for some $k_2 > 0$, $A(x) = \typemult{(\sigma_1 \cap \cdots \cap \sigma_{k_2})}{n_1 + \cdots + n_{k_2}}$, with $A = A_0 \otimes \cdots \otimes A_{k_1}$. And \\
            $\finexpansion{N : \sigma_j}{(P_j, B_j)} \text{, where } 1 \leq j \leq k_2$.

            We know that $\expsubst{(M_1 M_2)}{x}{N} \singlearrow \expsubst{M_1}{x}{N} \expsubst{M_2}{x}{N}$.

            Now, \\
            $\finexpansion{\expsubst{M_1}{x}{N} \expsubst{M_2}{x}{N} : \sigma}{((\boudolsubst{\transboudol{M_1}}{\finboudol{P}{s}}{x}\finboudol{(\boudolsubst{Q}{\finboudol{P}{(n_1 + \cdots + n_{k_2})-s}}{x})}{m_1 + \cdots + m_{k_1}}), A \otimes B_1 \otimes \cdots \otimes B_{k_2})}$.
            
            By the definition of expansion context, we have that $A(x) = \typemult{(\sigma_1 \cap \cdots \cap \sigma_{k_2})}{n_1 + \cdots + n_{k_2}}$ if and only if, for some $s$, such that $0 \leq s \leq n_1 + \cdots + n_{k_2}$, $\exists \psi_1, \psi_2$, such that $x : \typemult{\psi_1}{s} \in A_0$ and $x : \typemult{\psi_2}{(n_1 + \cdots + n_{k_2})-s} \in (A_1 \otimes \cdots \otimes A_{k_1})$. 

            Consequently, we have \\
            $\finexpansion{\expsubst{M_1}{x}{N}: \typemult{\tau_1}{m_1} \cap \cdots \cap \typemult{\tau_{k_1}}{m_{k_1}} \to \sigma}{(\boudolsubst{\transboudol{M_1}}{\finboudol{P}{s}}{x}, A_0 \otimes C_1)}$, where \\
            $C_1 \subseteq (B_1 \otimes \cdots \otimes B_{k_2})$, because $\finexpansion{M_1 : \typemult{\tau_1}{m_1} \cap \cdots \cap \typemult{\tau_{k_1}}{m_{k_1}} \to \sigma}{(\transboudol{M_1}, A_0)}$ and, since $A_0(x) = \typemult{\psi_1}{s}$ then $N$ is only expanded $s$ times with intersection type $\psi_1$.

            We also have that $\finexpansion{\expsubst{M_2}{x}{N}}{((\boudolsubst{Q_i}{\finboudol{P}{(n_1 + \cdots + n_{k_2})-s}}{x}), A_1 \otimes \cdots \otimes A_{k_1} \otimes C_2)}$, where $C_2 \subseteq (B_1 \otimes \cdots \otimes B_{k_2})$, because $\finexpansion{M_2 : \tau_i}{(Q_i, A_i)}$ and, since $(A_1 \otimes \cdots \otimes A_{k_1})(x) = \typemult{\psi_2}{(n_1 + \cdots + n_{k_2})-s}$ then $N$ is only expanded $(n_1 + \cdots + n_{k_2})-s$ times with intersection type $\psi_2$.

            By \cref{boudol:subst_app_lemma}.\ref{boudol:subst_app_lemma_fin}, we have that \\
            $\boudolsubst{(\transboudol{M_1}\finboudol{Q}{m_1 + \cdots + m_{k_1}})}{\finboudol{P}{n_1 + \cdots + n_{k_2}}}{x} \boudolequal (\boudolsubst{\transboudol{M_1}}{\finboudol{P}{s}}{x}\finboudol{(\boudolsubst{Q}{\finboudol{P}{(n_1 + \cdots + n_{k_2})-s}}{x})}{m_1 + \cdots + m_{k_1}})$. 

        \item Given $\finexpansion{\expsubst{(M_1 M_2)}{x}{N}: \sigma}{(\boudolsubst{(\transboudol{M_1}\finboudol{Q}{m_1 + \cdots + m_{k_1}})}{\finboudol{P}{n_1 + \cdots + n_{k_2}}}{x}, A \otimes B_1 \otimes \cdots \otimes B_{k_2})}$, where $x \in \textit{fv}(M_1)$ and $x \notin \textit{fv}(M_2)$, because
            \[ \finexpansion{M_1M_2: \sigma}{(\transboudol{M_1}\finboudol{Q}{m_1 + \cdots + m_{k_1}}, A_0 \otimes A_1 \otimes \cdots \otimes A_{k_1})} \]
            where, for some $k_1>0$, $\finexpansion{M_1: \typemult{\tau_1}{m_1} \cap \cdots \cap \typemult{\tau_{k_1}}{m_{k_1}} \to \sigma}{(\transboudol{M_1}, A_0)}$ and $\finexpansion{M_2 : \tau_i}{(Q_i, A_i)}$, where $1 \leq i \leq k_1$
            and, for some $k_2 > 0$, $A(x) = \typemult{(\sigma_1 \cap \cdots \cap \sigma_{k_2})}{n_1 + \cdots + n_{k_2}}$, with \\
            $A = A_0 \otimes \cdots \otimes A_{k_1}$. And $\finexpansion{N : \sigma_j}{(P_j, B_j)} \text{, where } 1 \leq j \leq k_2$.

            We have $\boudolsubst{(\transboudol{M_1}\finboudol{Q}{m_1 + \cdots + m_{k_1}})}{\finboudol{P}{n_1 + \cdots + n_{k_2}}}{x} \boudoldblarrow (M_1{'}\finboudol{Q}{m_1 + \cdots + m_{k_1}})$ by rule S$_2$, because \\
            $\boudolsubst{\transboudol{M_1}}{\finboudol{P}{n_1 + \cdots + n_{k_2}}}{x} \boudoldblarrow M_1{'}$.

            We know that $\expsubst{(M_1 M_2)}{x}{N} \singlearrow \expsubst{M_1}{x}{N} M_2$. And \\
            $\finexpansion{\expsubst{M_1}{x}{N} M_2:\sigma}{((\boudolsubst{\transboudol{M_1}}{\finboudol{P}{n_1 + \cdots + n_{k_2}}}{x}\finboudol{Q}{m_1 + \cdots + m_{k_1}}), A \otimes B_1 \otimes \cdots \otimes B_{k_2})}$. By rule E$_1$, \\
            $(\boudolsubst{\transboudol{M_1}}{\finboudol{P}{n_1 + \cdots + n_{k_2}}}{x}\finboudol{Q}{m_1 + \cdots + m_{k_1}}) \boudoldblarrow (M_1{'}\finboudol{Q}{m_1 + \cdots + m_{k_1}})$, because $\boudolsubst{\transboudol{M_1}}{\finboudol{P}{n_1 + \cdots + n_{k_2}}}{x} \boudoldblarrow M_1{'}$.

        \item Given $\finexpansion{\expsubst{(M_1 M_2)}{x}{N}: \sigma}{(\boudolsubst{(\transboudol{M_1}\finboudol{Q}{m_1 + \cdots + m_{k_1}})}{\finboudol{P}{n_1 + \cdots + n_{k_2}}}{x}, A \otimes B_1 \otimes \cdots \otimes B_{k_2})}$, where $x \notin \textit{fv}(M_1)$ and $x \in \textit{fv}(M_2)$, because $\finexpansion{M_1M_2: \sigma}{(\transboudol{M_1}\finboudol{Q}{m_1 + \cdots + m_{k_1}}, A_0 \otimes A_1 \otimes \cdots \otimes A_{k_1})}$ 
            where, for some $k_1>0$, $\finexpansion{M_1: \typemult{\tau_1}{m_1} \cap \cdots \cap \typemult{\tau_{k_1}}{m_{k_1}} \to \sigma}{(\transboudol{M_1}, A_0)}$ and \\
            $\finexpansion{M_2 : \tau_i}{(Q_i, A_i)} \text{, where } 1 \leq i \leq k_1$
            and, for some $k_2 > 0$, $A(x) = \typemult{(\sigma_1 \cap \cdots \cap \sigma_{k_2})}{n_1 + \cdots + n_{k_2}}$, with $A = A_0 \otimes \cdots \otimes A_{k_1}$. And $\finexpansion{N : \sigma_j}{(P_j, B_j)} \text{, where } 1 \leq j \leq k_2$.

            Since $x \notin \textit{fv}(M_1)$, then $\transboudol{M_1} \equiv \lambda y. \transboudol{M_1{'}}$. Therefore, \\
            $\boudolsubst{((\lambda y. \transboudol{M_1{'}})\finboudol{Q}{m_1 + \cdots + m_{k_1}})}{\finboudol{P}{n_1 + \cdots + n_{k_2}}}{x} \boudolarrow \boudolsubst{(\boudolsubst{\transboudol{M_1{'}}}{\finboudol{Q}{m_1 + \cdots + m_{k_1}}}{y})}{\finboudol{P}{n_1 + \cdots + n_{k_2}}}{x}$

            We know that $\expsubst{(M_1 M_2)}{x}{N} \singlearrow M_1 \expsubst{M_2}{x}{N}$. And \\
            $\finexpansion{M_1 \expsubst{M_2}{x}{N}:\sigma}{((\transboudol{M_1}\finboudol{(\boudolsubst{Q}{\finboudol{P}{n_1 + \cdots + n_{k_2}}}{x})}{m_1 + \cdots + m_{k_1}}), A \otimes B_1 \otimes \cdots \otimes B_{k_2})}$.

            Since $\transboudol{M_1} \equiv \lambda y. M_1{'}$, \\
            $((\lambda y. \transboudol{M_1{'}})\finboudol{(\boudolsubst{Q}{\finboudol{P}{n_1 + \cdots + n_{k_2}}}{x})}{m_1 + \cdots + m_{k_1}}) \boudolarrow \boudolsubst{\transboudol{M_1{'}}}{\finboudol{(\boudolsubst{Q}{\finboudol{P}{n_1 + \cdots + n_{k_2}}}{x})}{m_1 + \cdots + m_{k_1}}}{y}$.

            Since $x \notin \textit{fv}(M_1)$, $\boudolsubst{(\boudolsubst{\transboudol{M_1{'}}}{\finboudol{Q}{m_1 + \cdots + m_{k_1}}}{y})}{\finboudol{P}{n_1 + \cdots + n_{k_2}}}{x} \boudolequal \boudolsubst{\transboudol{M_1{'}}}{\finboudol{(\boudolsubst{Q}{\finboudol{P}{n_1 + \cdots + n_{k_2}}}{x})}{m_1 + \cdots + m_{k_1}}}{y}$.
    \end{itemize}
    \item Inductive case: $\finexpansion{MN:\sigma}{((\transboudol{M}\finboudol{Q}{m_1 + \cdots + m_{k_1}}), A_0 \otimes A_1 \otimes \cdots \otimes A_k)}$ and \\ 
    $\finexpansion{(\expsubst{M}{x}{N_1})N_2 : \sigma}{((\boudolsubst{\transboudol{M}}{\finboudol{P}{n_1 + \cdots + n_{k_2}}}{x}\finboudol{Q}{m_1 + \cdots + m_{k_1}}), A_0 \otimes \cdots \otimes A_{k_1} \otimes B_1 \otimes \cdots B_{k_2})}$ follow easily by induction.
\end{itemize}

\end{proof}

\begin{example} \label[example]{exp:theo_exp_fin_example}
    Let $I \equiv \lambda y.y$, $\Delta \equiv \lambda z.zz$ and $\sigma \equiv \alpha \to \alpha$.

    We know that $\expsubst{(\expsubst{(fx)}{x}{I})}{f}{\Delta} \singlearrow \expsubst{(f\expsubst{x}{x}{I})}{f}{\Delta}$ and, from \\
    $\finexpansion{f:((\sigma \to \sigma) \cap \sigma) \to \sigma}{(f, \{f : \typemult{((\sigma \to \sigma) \cap \sigma) \to \sigma}{1}\})}$, \\
    $\finexpansion{x : \sigma \to \sigma}{(x, \{x : \typemult{(\sigma \to \sigma)}{1}\})}$, $\finexpansion{x:\sigma}{(x, \{x : \typemult{\sigma}{1}\})}$, we have \\
    $\finexpansion{fx : \sigma}{((f\finboudol{x}{2}), \{f:\typemult{((\sigma \to \sigma) \cap \sigma) \to \sigma}{1}, x: \typemult{((\sigma \to \sigma) \cap \sigma)}{2}\})}$, and from \\
    $\finexpansion{I : \sigma \to \sigma}{I}$ and $\finexpansion{I:\sigma}{I}$, we have \\
    $\finexpansion{\expsubst{(fx)}{x}{I}:\sigma}{(\boudolsubst{(f\finboudol{x}{2})}{\finboudol{I}{2}}{x}, \{f:\typemult{((\sigma \to \sigma) \cap \sigma) \to \sigma}{1}, x: \typemult{((\sigma \to \sigma) \cap \sigma)}{2}\})}$, and from \\
    $\finexpansion{z : \sigma \to \sigma}{(z, \{z : \typemult{(\sigma \to \sigma)}{1}\})}$ and $\finexpansion{z : \sigma}{(z, \{z : \typemult{\sigma}{1}\})}$, we get \\
    $\finexpansion{zz:\sigma}{((z\finboudol{z}{1}), \{z : \typemult{((\sigma \to \sigma) \cap \sigma)}{1}\})}$, and from $\finexpansion{\Delta:((\sigma \to \sigma) \cap \sigma) \to \sigma}{\Delta_1}$, where \\
    $\Delta_1 \equiv \lambda z. (z\finboudol{z}{1})$, we get \\
    $\finexpansion{\expsubst{(\expsubst{(fx)}{x}{I})}{f}{\Delta}:\sigma}{(\boudolsubst{(\boudolsubst{(f\finboudol{x}{2})}{\finboudol{I}{2}}{x})}{\Delta_1}{f}, \{f:\typemult{(((\sigma \to \sigma) \cap \sigma) \to \sigma)}{1}, x: \typemult{((\sigma \to \sigma) \cap \sigma)}{2}\})}$    
 
    \[
    \begin{array}{rcll}
        \boudolsubst{(\boudolsubst{(f\finboudol{x}{2})}{\finboudol{I}{2}}{x})}{\Delta_1}{f} \boudolarrow \boudolsubst{(\boudolsubst{(\Delta_1\finboudol{x}{2})}{1}{f})}{\finboudol{I}{2}}{x} &\boudolarrow& \boudolsubst{(\boudolsubst{(\boudolsubst{(z\finboudol{z}{1})}{\finboudol{x}{2}}{z})}{1}{f})}{\finboudol{I}{2}}{x} \\
        &\boudolarrow& \boudolsubst{(\boudolsubst{(\boudolsubst{(x\finboudol{z}{1})}{x}{z})}{1}{f})}{\finboudol{I}{2}}{x} \\
        &\boudolarrow& \boudolsubst{(\boudolsubst{(\boudolsubst{(I\finboudol{z}{1})}{I}{x})}{x}{z})}{1}{f} \\
        &\boudolarrow& \boudolsubst{(\boudolsubst{(\boudolsubst{(\boudolsubst{y}{z}{y})}{I}{x})}{x}{z})}{1}{f} \\
        &\boudolarrow& \boudolsubst{(\boudolsubst{(\boudolsubst{(\boudolsubst{z}{1}{y})}{I}{x})}{x}{z})}{1}{f} \\
        &\boudolarrow& \boudolsubst{(\boudolsubst{(\boudolsubst{(\boudolsubst{x}{1}{z})}{1}{y})}{I}{x})}{1}{f} \\
        &\boudolarrow& \boudolsubst{(\boudolsubst{(\boudolsubst{(\boudolsubst{I}{1}{x})}{1}{z})}{1}{y})}{1}{f} \equiv I
    \end{array}
    \]
    Now, from $\finexpansion{f:((\sigma \to \sigma) \cap \sigma) \to \sigma}{(f, \{f : \typemult{((\sigma \to \sigma) \cap \sigma) \to \sigma}{1}\})}$, and \\
    $\finexpansion{x : \sigma \to \sigma}{(x, \{x : \typemult{(\sigma \to \sigma)}{1}\})}$, $\finexpansion{x:\sigma}{(x, \{x : \typemult{\sigma}{1}\})}$, 
    $\finexpansion{I : \sigma \to \sigma}{I}$ and $\finexpansion{I:\sigma}{I}$, we have $\finexpansion{f\expsubst{x}{x}{I}:\sigma}{((f\finboudol{(\boudolsubst{x}{I}{x})}{2}), \{f:\typemult{((\sigma \to \sigma) \cap \sigma) \to \sigma}{1}, x: \typemult{((\sigma \to \sigma) \cap \sigma)}{2}\})}$, and from $\finexpansion{z : \sigma \to \sigma}{(z, \{z : \typemult{(\sigma \to \sigma)}{1}\})}$ and $\finexpansion{z : \sigma}{(z, \{z : \typemult{\sigma}{1}\})}$, we get \\
    $\finexpansion{zz:\sigma}{((z\finboudol{z}{1}), \{z : \typemult{((\sigma \to \sigma) \cap \sigma)}{1}\})}$, and from $\finexpansion{\Delta:((\sigma \to \sigma) \cap \sigma) \to \sigma}{\Delta_1}$, where $\Delta_1 \equiv \lambda z. (z\finboudol{z}{1})$, we get \\
    \[
    \begin{array}{rcll}
         &&\finexpansion{\expsubst{(f(\expsubst{x}{x}{I}))}{f}{\Delta}:\sigma}{(\boudolsubst{(f\finboudol{(\boudolsubst{x}{I}{x})}{2})}{\Delta_1}{f}, \\
    &&\hspace{5.4cm}\{f:\typemult{(((\sigma \to \sigma) \cap \sigma) \to \sigma)}{1}, x: \typemult{((\sigma \to \sigma) \cap \sigma)}{2}\})}
    \end{array}
    \]
    \[
    \begin{array}{rcll}
        (\boudolsubst{(f\finboudol{(\boudolsubst{x}{I}{x})}{2})}{\Delta_1}{f} \boudolarrow (\boudolsubst{(\Delta_1\finboudol{(\boudolsubst{x}{I}{x})}{2})}{1}{f} &\boudolarrow& \boudolsubst{(\boudolsubst{(z\finboudol{z}{1})}{\finboudol{(\boudolsubst{x}{I}{x})}{2}}{z})}{1}{f} \\
        &\boudolarrow& \boudolsubst{(\boudolsubst{(\boudolsubst{x}{I}{x}\finboudol{z}{1})}{\boudolsubst{x}{I}{x}}{z})}{1}{f} \\
        &\boudolarrow& \boudolsubst{(\boudolsubst{(\boudolsubst{I}{1}{x}\finboudol{z}{1})}{\boudolsubst{x}{I}{x}}{z})}{1}{f} \\
        &\boudolarrow& \boudolsubst{(\boudolsubst{(\boudolsubst{(\boudolsubst{y}{z}{y})}{1}{x})}{\boudolsubst{x}{I}{x}}{z})}{1}{f} \\
        &\boudolarrow& \boudolsubst{(\boudolsubst{(\boudolsubst{(\boudolsubst{z}{1}{y})}{1}{x})}{\boudolsubst{x}{I}{x}}{z})}{1}{f} \\
        &\boudolarrow& \boudolsubst{(\boudolsubst{(\boudolsubst{(\boudolsubst{(\boudolsubst{x}{I}{x})}{1}{z})}{1}{y})}{1}{x})}{1}{f} \\
        &\boudolarrow& \boudolsubst{(\boudolsubst{(\boudolsubst{(\boudolsubst{(\boudolsubst{I}{1}{x})}{1}{z})}{1}{y})}{1}{x})}{1}{f} \equiv I
    \end{array}
    \]
\end{example}

\begin{corollary} \label[corollary]{exp:cor_exp_fin}
    Given a $\lambda$x-term $M$ and an AC-intersection type $\sigma$, such that $\finexpansion{M: \sigma}{(\transboudol{M}, A)}$, if $M \dblarrow V$ then $\transboudol{M} \boudolequal V'$ and $\finexpansion{V:\sigma}{(V', B)}$, where $V$ and $V'$ are weak-head normal forms and $B \subseteq A$.
\end{corollary}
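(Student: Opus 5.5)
The plan is to derive the corollary from \cref{exp:theo_exp_fin} by induction on the length $n$ of the reduction sequence $M = M_0 \singlearrow M_1 \singlearrow \cdots \singlearrow M_n = V$. I will prove a slightly stronger invariant: for every $i \leq n$ there are $N_i \in \Lambda$ and a context $A_i$ with $\transboudol{M} \boudolequal N_i$, $\finexpansion{M_i:\sigma}{(N_i, A_i)}$ and $A_i \subseteq A$.

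For $i=0$ take $N_0 = \transboudol{M}$ and $A_0 = A$; reflexivity of $\boudolequal$ gives $\transboudol{M} \boudolequal N_0$. For the inductive step, assume the invariant at $i$. Since $M_i \singlearrow M_{i+1}$ and $\finexpansion{M_i:\sigma}{(N_i,A_i)}$, \cref{exp:theo_exp_fin} yields $N_{i+1}$ with $N_i \boudolequal N_{i+1}$, $\finexpansion{M_{i+1}:\sigma}{(N_{i+1},A_{i+1})}$ and $A_{i+1}\subseteq A_i$. Because $\boudolequal$ is by definition an equivalence relation, transitivity gives $\transboudol{M}\boudolequal N_{i+1}$. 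Transitivity of $\subseteq$ gives $A_{i+1}\subseteq A$. At $i=n$ set $V' = N_n$ and $B = A_n$.

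It remains to show that $V'$ is a weak-head normal form. $V$ is a closed weak-head normal form, so it is either $\lambda x.M'$ or an abstraction under a stack of explicit substitutions $\expsubst{(\cdots\expsubst{(\lambda x.M')}{y_1}{P_1}\cdots)}{y_j}{P_j}$. I will argue by induction on this shape, inspecting \cref{exp:def_exp_fin}.

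\begin{itemize}
\item An expansion of an abstraction can only come from the abstraction clause, so it has the form $\lambda x.\transboudol{M'}$, which is a value $V$ in the sense of \cref{boudol:terms}.
\item An expansion of $\expsubst{W}{y}{P}$ can only come from the substitution clause. It is therefore $\boudolsubst{W^*}{R}{y}$ with $W^*$ an expansion of $W$. By the inner induction $W^*$ is a value, and hence so is $\boudolsubst{W^*}{R}{y}$ by the value grammar.
\end{itemize}

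I expect the main obstacle to be this last point. Since expansion is a relation rather than a function, I must make sure the expansion of a weak-head normal form is always a value and never a deadlocked term. The syntax-directed shape of \cref{exp:def_exp_fin} settles this: the expansion's head constructor mirrors the source term, so no reduction is needed. The only other subtlety is that the corollary's $\boudolequal$ is the equivalence closure, which the chaining step relies on; if one instead wanted $\transboudol{M}\boudoldblarrow V'$, that would need more than \cref{exp:theo_exp_fin} provides.
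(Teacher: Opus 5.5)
Your proposal is correct and matches the paper's intended argument. The paper gives no separate proof and treats the corollary as immediate from \cref{exp:theo_exp_fin}, applied step by step along the reduction and chained by transitivity of $\boudolequal$ and of $\subseteq$. Your syntax-directed check that $V'$ is a value is something the paper leaves unstated, and it covers the closed case the paper has in mind. If $M$ is open (as the non-empty context $A$ allows), you would also need the neutral weak-head normal forms $x N_1 \cdots N_k$, whose expansions $x R_1 \cdots R_k$ cannot reduce in Boudol's calculus but are not values in its grammar.
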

\begin{example} \label[example]{exp:cor_exp_fin_example}
    Following the example in \cref{exp:theo_exp_fin_example}, we have 
    \[
    \begin{array}{rcll}
        \expsubst{(\expsubst{(fx)}{x}{I})}{f}{\Delta} &\singlearrow& \expsubst{(f(\expsubst{x}{x}{I}))}{f}{\Delta} \\
        &\singlearrow& \expsubst{f}{f}{\Delta}(\expsubst{x}{x}{I}) \\
        &\singlearrow& \Delta(\expsubst{x}{x}{I}) \\
        &\singlearrow& \expsubst{(zz)}{z}{(\expsubst{x}{x}{I})} \\
        &\singlearrow& \expsubst{z}{z}{(\expsubst{x}{x}{I})}(\expsubst{z}{z}{(\expsubst{x}{x}{I})}) \\
        &\singlearrow& \expsubst{x}{x}{I}(\expsubst{z}{z}{(\expsubst{x}{x}{I})}) \\
        &\singlearrow& I\expsubst{z}{z}{(\expsubst{x}{x}{I})} \\
        &\singlearrow& \expsubst{y}{y}{(\expsubst{z}{z}{(\expsubst{x}{x}{I})})} \\
        &\singlearrow& \expsubst{z}{z}{(\expsubst{x}{x}{I})} \singlearrow \expsubst{x}{x}{I} \singlearrow I 
    \end{array}
    \] 
    We already know that $\boudolsubst{(\boudolsubst{(f\finboudol{x}{2})}{\finboudol{I}{2}}{x})}{\Delta_1}{f} \boudoldblarrow I$ and $\finexpansion{I:\sigma}{I}$.
\end{example}

Note that the expansion relation heavily depends on the intersection type used to expand a term. As an example, if one tries to expand the term $\lambda x.xx$ with type $\alpha \cap \alpha \to \alpha$ one can easily see that this is impossible because we are not able to expand $xx$ using type $\alpha$, using $x : \alpha \cap \alpha$ in the expansion context. In fact, by definition, in the expansion of the first occurence of $x$, the type must be of the form $\tau_1 \cap \cdots \cap \tau_k \to \sigma$, but it is $\alpha$ in this example. 

\subsection{Expansion and weak-head reduction}
Weak-head reduction is crucial in the definition of a sound term expansion. In fact, expansion is preserved by weak-head reduction, as demonstrated in \cref{exp:theo_exp_inf} and \cref{exp:theo_exp_fin}. But note that the same does not happen for $\lambda$xgc-reduction as originally stated in \cref{expsubst:reduction}. Consider the following example.
\begin{example} \label[example]{exp:weak_head_red_example}
    \[
        \lambda x. (\lambda y. z) xx \singlearrow \lambda x. \expsubst{z}{y}{x} x \singlearrow \lambda x. zx
    \]

    \[ \infexpansion{\lambda x. (\lambda y. z) xx : \alpha_1 \cap \alpha_2 \to \beta}{\lambda x. ((\lambda y. z)\infboudol{(x\infboudol{x})})} \]

    $\lambda x. (\lambda y. z)\infboudol{(x\infboudol{x})} \boudoldblarrow \lambda x. ((\lambda y. z)\infboudol{(x\infboudol{x})})$
\end{example}
By \cref{exp:theo_exp_inf} we can easily see that this does not hold because \\
$\infexpansion{\lambda x. zx : \alpha_1 \cap \alpha_ 2 \to \beta_1}{\lambda x. (z \infboudol{x})}$ and $\lambda x. ((\lambda y. z)\infboudol{(x\infboudol{x})}) \not\boudolequal \lambda x. (z \infboudol{x})$.
\\
\\
Note that the original definition of expansion \cite{florido2004linearization} also preserves weak-head reduction and does not preserve strong reduction. This is essentially due to the lack of subject reduction of the intersection type system, implicitly used to direct the expansion process.
 \section{Conclusions}
In this paper, we proved that there exists a relation between associative, commutative and idempotent intersection types and a resource calculus that deals with infinite multiplicities, $m = \infty$, and we also proved that this relation may be extended to the use of associative, commutative and non-idempotent intersection types which, in this case, defines a relation with a resource calculus of finite multiplicities, $m \in \mathbb{N}$, where conjunction is managed in a multiplicative manner, according to linear logic terminology. 

In the future, we want to study the application of {\em term expansion} to relate the $\lambda$-calculus with concurrent calculi. Note that concurrent calculi \cite{milner1992calculus} are more discriminating than the $\lambda$-calculus, in the sense that they have a limited availability of resources. The relation between the $\lambda$-calculus with explicit substitutions, of finite or infinite multiplicities, and the $\pi$-calculus is not new \cite{boudol1995lambda}, and this previous work strongly suggests that expansion may have a role in establishing new relations between the two calculi.
\newline
\newline
{\bf Acknowledgments:} This work was financially supported by: UID/00027/2025 of the LIACC - Artificial Intelligence and Computer Science Laboratory with DOI https://doi.org/10.54499/UID/00027/2025 funded by Fundação para a Ciência e a Tecnologia, I.P./ MECI through the national funds. 
Ana Jorge Almeida is supported by Fundação para a Ciência e a Tecnologia (Portuguese Foundation for Science and Technology) through the Carnegie Mellon Portugal Program under the fellowship 
2025.15429.PRT. 
Sandra Alves is supported by national funds through FCT – Fundação para a Ciência e a Tecnologia, I.P., under the support UID/50014/2025 (https://doi.org/10.54499/UID/50014/2025).

\bibliographystyle{eptcs}
\bibliography{refs}
\end{document}